\newtheorem{lemma}{Lemma}
\newtheorem{proposition}[lemma]{Proposition}
\newcommand{\change}[1]{#1}
\newcommand{\newchange}[1]{#1}
\begin{document}

\title{Template-based Minor Embedding for \\ Adiabatic Quantum Optimization}

\author[1]{Thiago Serra} 
\author[2]{Teng Huang} 
\author[3]{Arvind Raghunathan}
\author[4]{David Bergman}

\affil[1]{Bucknell University}
\affil[2]{Lingnan (University) College, Sun Yat-sen University}
\affil[3]{Mitsubishi Electric Research Laboratories}
\affil[4]{University of Connecticut}

\date{}

\maketitle

\begin{abstract}
Quantum Annealing~(QA) can be used to quickly obtain near-optimal solutions for Quadratic Unconstrained Binary Optimization (QUBO) problems. 
In QA hardware, each decision variable of a QUBO should be mapped to one or more adjacent qubits in such a way that pairs of variables defining a quadratic term in the objective function are mapped to some pair of adjacent qubits.
However, qubits have limited connectivity in existing QA hardware.  
This has spurred work on preprocessing algorithms for embedding the graph representing problem variables with quadratic terms into the hardware graph representing qubits adjacencies, such as the Chimera graph in hardware produced by D-Wave Systems.
In this paper, we use integer linear programming to search for an embedding of the problem graph into certain classes of minors of the Chimera graph, which we call \emph{template embeddings}. One of these classes corresponds to complete bipartite graphs, for which we show the limitation of the existing approach based on minimum Odd Cycle Transversals~(OCTs). \newchange{One} of the formulations presented \newchange{is} exact, and thus can be used to certify the absence of a minor embedding \newchange{using that template}. On an extensive test set consisting of random graphs from five different classes of varying size and sparsity, we can embed  more graphs than a state-of-the-art OCT-based approach, \change{our  approach scales better with the hardware size, and the runtime is generally orders of magnitude smaller.} 
\end{abstract}

\section{Introduction}

Quantum Annealing~(QA) is a technique based on a system of quantum particles in which the Hamiltonian---an operator corresponding to the sum of potential and kinetic energies---is modeled after the objective function of a binary optimization problem~\citep{finnila1994quantum,kadowaki1998quantum}. 
Adiabatic Quantum Computation (AQC) is a particular form of QA 
that exploits the adiabatic theorem of quantum physics~\citep{Farhi2000,Farhi2001}, 
which states that a quantum system in the ground state evolves predominantly in the ground state---i.e., at the lowest possible energy level---if the rate at which the system changes is sufficiently small. 
Hence, AQC can approximately solve unconstrained optimization problems 
by slowly evolving a quantum system from a configuration in known 
ground state toward a configuration in which the ground state 
corresponds to a solution of minimum value for the optimization problem.

It is possible to model many combinatorial optimization problems with such a technique~\citep{Lucas2014}.  
However, it is not known if the quantum system can always evolve in polynomial time on the inputs in order to reach a quantum speedup---i.e., solve a problem much faster than any classical algorithm would~\citep{SpeedUp}. 
For example, 
there are cases in which a linear transformation between quantum states requires exponential time~\citep{VanDam2001}. 
Nevertheless, for some problems the quantum system evolves efficiently~\citep{Farhi2000} or at 
least faster than classical algorithms~\citep{Lucas2018}.     
When compared with commercial optimization software,  
AQC has been found to perform better in one case~\citep{McGeoch2013} and comparable in another~\citep{Coffrin}. 
\change{Since the solutions obtained through AQC are sometimes suboptimal, 
\cite{Dash} observes that AQC is arguably comparable to heuristics that can find optimal solutions with high probability. 
Hence, as this technology matures, its competitive advantage 
will depend on the speed of convergence as well as on being more reliable than such heuristics.
}

AQC has been used to approximately solve Quadratic Unconstrained Binary Optimization~(QUBO) problems of the form 
\[
\min_{x \in \{0,1\}^n} x^T Q x \qquad \change{\left( \text{or} \qquad \max_{x \in \{0,1\}^n} x^T Q x \right)\footnotemark},
\]
\footnotetext{\change{We regard the maximization version, which is preferred by some authors, as interchangeable with the minimization version by negating $Q$. For that reason, we only consider the minimization version for the rest of the paper.}}
where $Q \in \mathbb{R}^{n \times n}$ are input data and $x$ are decision variables.\footnote{According to \citet{boros2002pseudo}, 
QUBO is also known as a quadratic pseudo-Boolean function in the literature since~\cite{kalantari1986quadratic}.} %
Without loss of generality, 
let $Q$ be an upper-triangular matrix. 
In this paper, 
we will refer to any device implementing AQC as a \emph{quantum annealer}, 
or a QA hardware.

Although existing QA hardware may potentially be used on problems with as many as 2048 variables, 
in practice only problems that are much smaller or substantially sparse can be directly formulated. 
To circumvent this limitation to some extent, 
we may use a surrogate formulation in which the optimal solutions can be mapped to optimal solutions of the original problem. 
For defining such formulation, it is 
often necessary to analyze the structure of the 
QA hardware 
and 
of the optimization problem that we want to solve. 
\change{
That 
entails determining if the graph associated with the problem is a minor of the graph associated with the QA hardware,  
which is an NP-complete problem in general~\citep{Johnson87}. 
}

In this paper, we show that Integer Linear Programming~(ILP) can be effectively used for this preprocessing step, in which we determine how a QUBO problem can be modeled in QA hardware if the qubits have limited connectivity. 
We propose ILP formulations based on particular minors of the QA hardware, which we denote \emph{template embeddings}. 

We introduce the minor embedding problem for QA hardware and previous approaches in Section~\ref{sec:back}, and then contextualize our contribution in Section~\ref{sec:contr}. 
We describe the template embeddings, their properties, and corresponding ILP formulations in \newchange{Sections~\ref{sec:bte} and \ref{sec:qte}}. We present experimental results in Section~\ref{sec:exps} and final remarks in Section~\ref{sec:concl}.

\section{Background}\label{sec:back}

In this section we describe the graph embedding problem associated with QA hardware, 
in particular with Chimera graphs, 
and the previous approaches to this problem.

\subsection{Hardware and Problem Graphs}

The quantum annealer solves the Ising formulation
\[
\min\limits_{y \in \{-1,1\}^n} y^T J y + h^T y , 
\]
where $J \in \mathbb{R}^{n \times n}$ and $h \in \mathbb{R}^n$ are input data and 
$y$ are decision variables. 
Each binary variable $y_i$ is associated with a \emph{qubit} $i$, the basic unit of quantum information, 
and its value corresponds to the \emph{magnetic spin} of the quantum transistor that physically implements the qubit~\citep{DWaveWebsite}. 
The linear coefficient $h_i$ for each variable $y_i$ corresponds to the 
\emph{bias} of qubit $i$. The quadratic term $J_{i j}$ for each pair of variables 
$y_i$ and $y_j$, if nonzero, implies the existence of a \emph{coupler} between qubits $i$ and 
$j$, and the value of $J_{i j}$ represents the \emph{strength} of the coupler. 
For each QUBO formulation, there is a corresponding Ising formulation for which the variables 
with values -1 and 1 in an optimal solution of the Ising problem correspond to variables with 
values 0 and 1 in an optimal solution of QUBO, which is such that the zero elements in $Q$ are also zero in $J$~\citep{Choi2008}. 
Hence, we may assume that matrices $Q$ and $J$ have the same nonzero elements.

In practice, the hardware graph has a sparse structure. 
Due to engineering limitations, 
each qubit can only be coupled with a 
limited set of other qubits~\citep{Choi2008}. 
That implies that we can only directly solve problems with as many variables as the number of qubits if these problems follow the same sparsity structure as the QA hardware. 

A QA hardware can be modeled as an undirected graph in which the vertices correspond to qubits and the edges to 
pairs of coupled qubits. 
We denote it 
as the 
\emph{hardware graph} $H$. 
Similarly, 
we consider a \emph{problem graph} $G$ in which the vertices correspond to decision variables of an Ising formulation and the edges to 
pairs of variables with a quadratic term in the objective function. 
Let $G = (V(G), E(G))$,   
where $V(G) := \{ v_1, v_2, \ldots, v_n \}$ is the set of vertices of $G$ 
and $E(G)$ is a set of edges in which $\{v_i, v_j\} \in E(G)$ if $J_{i,j} \neq 0$. 
Similarly, let $H = (U(H), F(H))$, where vertex $u_i \in U(H)$ 
corresponds to qubit $i$ 
and $\{u_i, u_j\} \in F(H)$ implies that qubits $i$ and $j$ are coupled.
For convention, 
we will use $V_i$ for a subset of $V(G)$ and $U_i$ for a subset of vertices of $H$ or any of its minors.

A problem can be directly solved in QA hardware if there is a 
subgraph $H'$ of $H$ that is \emph{isomorphic} to $G$, 
i.e., there is a bijective mapping between the vertices of 
$G$ and $H'$ such that adjacent vertices in $G$ are mapped to adjacent vertices in $H'$. However, this 
greatly limits the class of problems that can be solved \change{if the vertices of $H$ have small degrees}.

The class of solvable problems can be enlarged by allowing each vertex of the problem graph $G$ to 
be mapped to possibly multiple vertices in the hardware graph $H$. This imposes the 
additional requirement that qubits associated with a  
vertex in $G$ have the same spin in the ground state.  
For example, two coupled qubits corresponding to vertices 
$u_i$ and $u_j$ can be induced to have the same spin 
in the ground state if $J_{i j}$ is negative and sufficiently large 
in absolute value~\citep{Kaminsky2004a,Kaminsky2004b}. 
In that case, those \emph{physical} qubits define a single 
\emph{logical} qubit. 
The multiplicity of physical qubits increases the number of 
neighbors and 
make it possible to embed a graph $G$ with higher 
connectivity than that present in $H$. 
More generally, we~say that $G$ can be embedded in a hardware graph $H$ if $G$ is a \emph{minor} of $H$~\citep{Choi2008}. 
A minor of a graph is any graph that can be obtained by a sequence of vertex and edge deletions as well as edge contractions~\citep{BondyMurty}. 
In the example above, contracting edge $\{u_i,u_j\}$ produces a graph in which vertices $u_i$ and $u_j$ are replaced by a vertex $u'$ that is adjacent to any vertex that was originally adjacent to either $u_i$ or $u_j$. 
Hence, the \emph{embedding} of $G$ in $H$ consists of assigning each vertex $v_i \in V(G)$ to a distinct set of vertices $U_i \subseteq U(H)$ 
such that the induced subgraph on $U_i$ is connected  
and, for each edge $\{v_i, v_j\} \in E(G)$, there exists $u_k \in U_i$ and $u_l \in U_j$ such that $(u_k, u_l) \in F(H)$. 

In the QAs that are produced and that are currently made commercially available by D-Wave Systems, 
the hardware graph follows the structure of a \emph{Chimera graph}. 
Let us denote it as a graph $C_{M,N,L}$ 
such that $ 2 M N L$ vertices 
are distributed in a grid of $M N$ cells. Each cell contains $2 L$ vertices.
Each of those cells is a complete bipartite graph $K_{L,L}$, where a left and a right partition each contain $L$ vertices. 
For ease of explanation, 
let us number the vertices in each partition from 1 to $L$. 
The $i$-th vertex in each left (right) partition is also adjacent to the corresponding $i$-th vertex of the left (right) partition in the cell above (to the left) and below (to the right). 
Figure~\ref{fig:Chimera} depicts $C_{2,2,4}$. 

The following hardware graphs have been used in QA hardware: 
$C_{4,4,4}$ in D-Wave One, $C_{8,8,4}$ in D-Wave Two, $C_{12,12,4}$ in D-Wave 2X, and $C_{16,16,4}$ in D-Wave 2000Q~\citep{Pegasus}. 
\change{In all of those cases, the maximum degree of a vertex is 6, 
which makes the problem of finding minor embeddings crucial to leverage such AQC hardware.} 
Since $M=N$ in all cases, 
we will follow the convention  of considering Chimera graphs of the form $C_{M,M,L}$. 

\begin{figure}[h!]
\centering
\includegraphics[width=0.5\columnwidth]{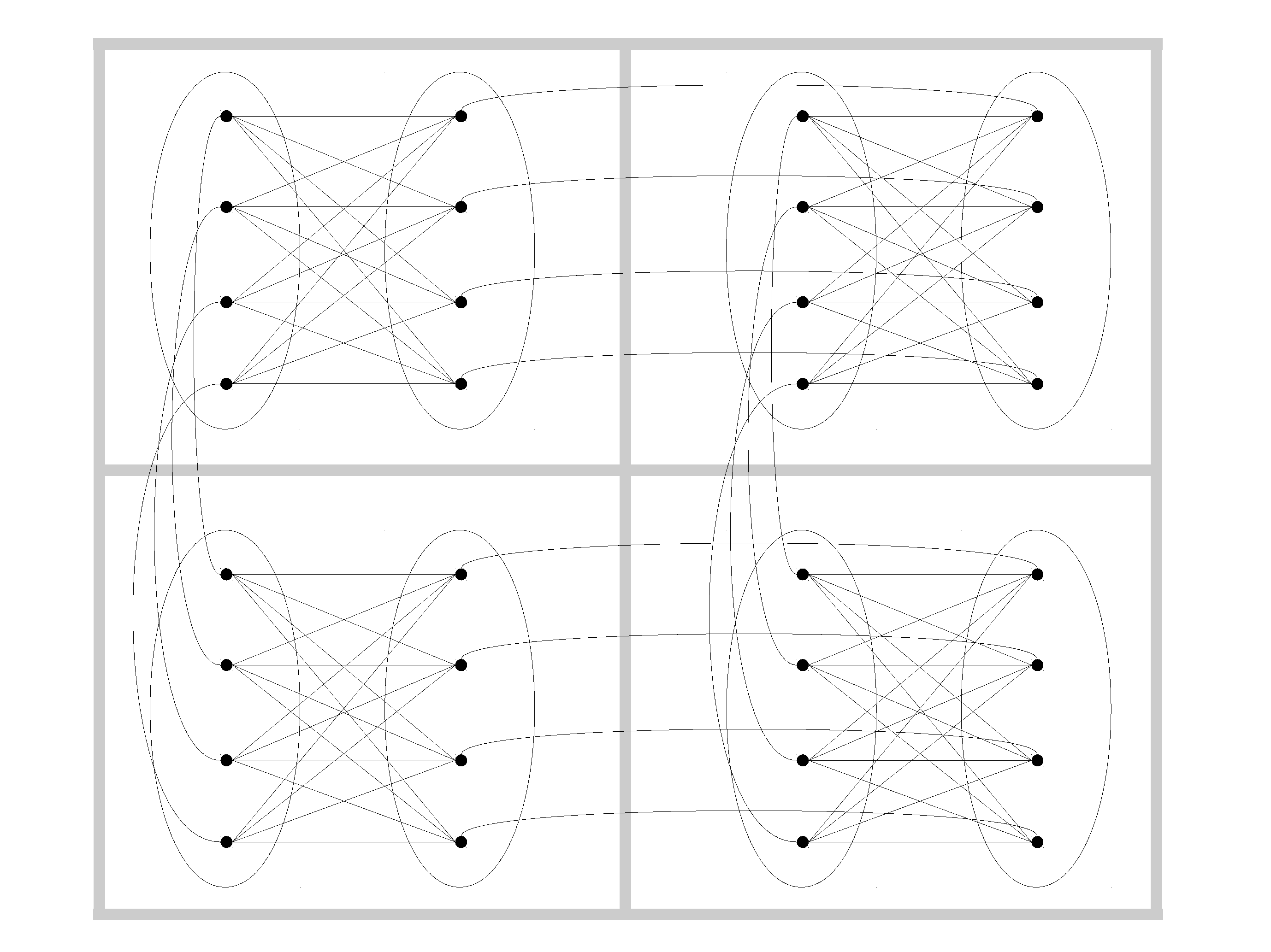}
\caption{Chimera graph $C_{2,2,4}$ 
with a $K_{4,4}$ on each cell of a $2 \times 2$ grid.}
\label{fig:Chimera}
\end{figure}

\subsection{Minor Embedding Algorithms}

Early work on minor embedding into hardware graphs has focused on complete graphs (or \emph{cliques}). 
A clique $K_n$ is a graph on $n$ vertices in which all vertices are adjacent to one another. 
If $K_n$ can be embedded into a hardware graph $H$, 
then any other graph $G$ with at most $n$ vertices can be embedded in the same hardware graph $H$, 
since $G$ is isomorphic to a subgraph of $K_n$. 
The TRIAD algorithm was the first technique to embed cliques in hardware graphs with 
limited connectivity of the qubits~\citep{Choi2011}. 

The TRIAD algorithm associates each vertex of the problem graph with a chain of vertices of the hardware graph 
that is long enough to have at least one vertex that is adjacent to some vertex of all other chains. 
These 
chains can be embedded into a Chimera graph, 
where a clique $K_{L M}$ fits into $C_{M,M,L}$~\citep{Choi2011}.
In fact, 
it is possible to embed a clique of size $L M +1$, but no clique larger than  $L \left(M+1\right)$, in $C_{M,M,L}$~\citep{klymko2014adiabatic}. 
Later work by \citet{boothby2016fast} generalized the form by which such clique embeddings can be obtained and consequently showed that there is an exponential number of such embeddings in the Chimera graph, 
which can be helpful if some qubits are inoperable and thus some vertices of the hardware graph $H$ are missing.

Figure~\ref{fig:triad_clique} illustrates how $K_{32}$ can be embedded in $C_{8,8,4}$ 
by dividing $32$ vertices into groups of $4$ vertices, which are indexed from $1$ to $8$.  
The first group of 4 vertices is associated with all left partitions of the first column of 
unit cells and also the right partition of the bottom unit cell. The second group of 4 vertices is 
associated with all left partitions of the second column, except the last one, and also with the 
right partitions of the occupied cells in the second row from the bottom. 
Similar L-shaped chains follow for the remaining 6 groups. 
The vertices in distinct groups are adjacent to one another through the cells in the upper 
triangle of the grid, and the vertices within each group are adjacent to one another through the 
cells in the main diagonal of the grid. 
If we associate the remaining cells with a single additional vertex, 
then we can embed $K_{33}$ instead. 
We know from~\cite{klymko2014adiabatic} that we cannot embed $K_{37}$, 
but it is not known if cliques $K_{34}$, $K_{35}$, or $K_{36}$ could be embedded in $C_{8,8,4}$.

\begin{figure}[h!]
\centering
\includegraphics[width=0.3\columnwidth]{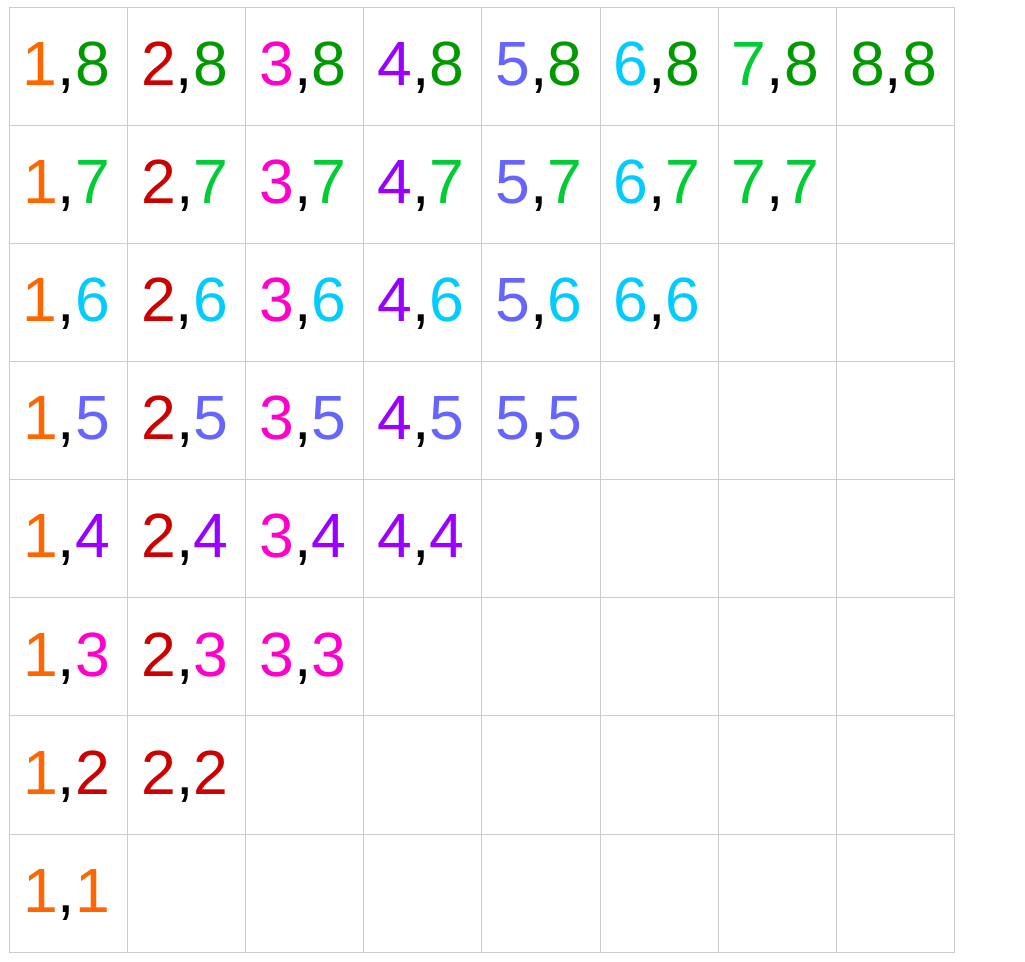} \qquad \includegraphics[width=0.3\columnwidth]{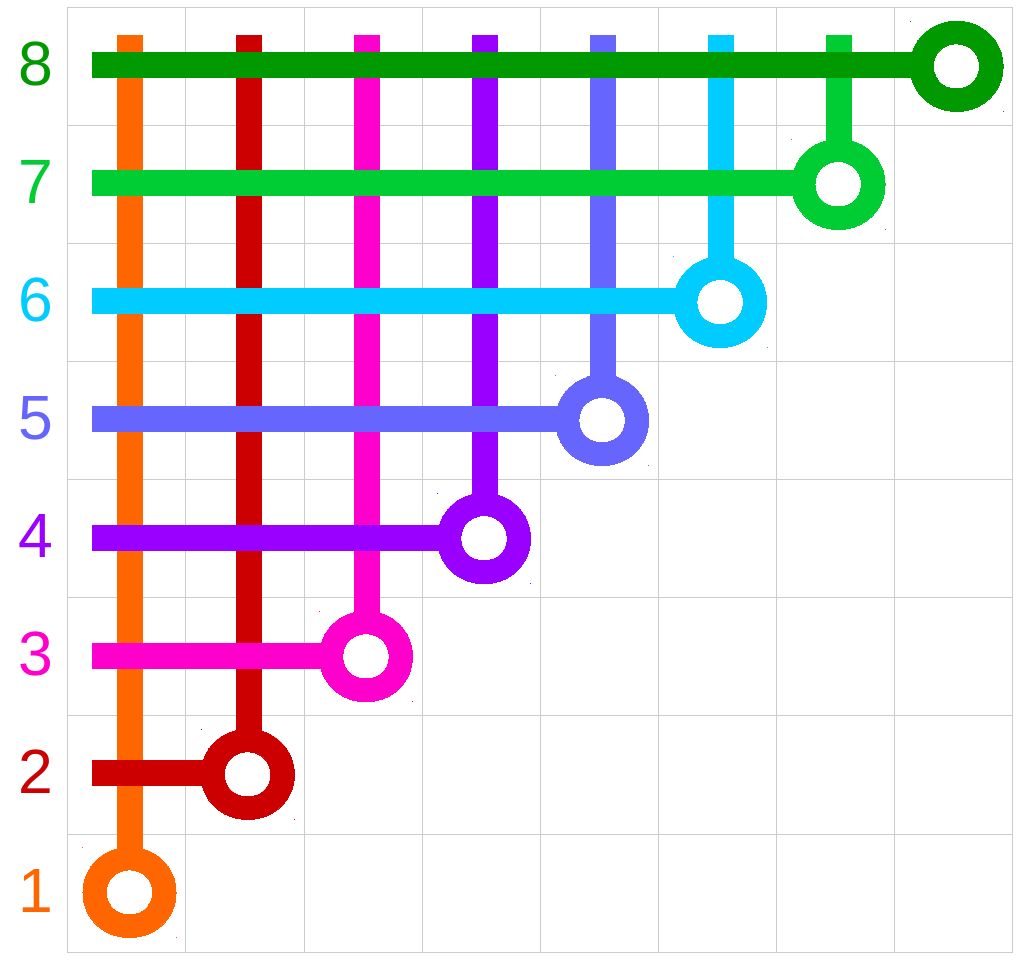}
\caption{An embedding of $K_{32}$ in $C_{8,8,4}$. Each group numbered from 1 to 8 consists of four connected vertices.  
On the left, those numbers are associated with left and right partitions of cells in the grid. On the right, vertical and horizontal lines correspond to each group occupying left and right partitions, and circles for both, following the notation in~\citet{boothby2016fast}.}
\label{fig:triad_clique}
\end{figure}

Recent work by~\cite{Date2019} has focused on limiting the number of qubits used when embedding graphs with at most $M L$ vertices.  
By reducing the number of qubits associated with each variable, their approach is able to obtain QUBO solutions that are closer to the optimal value. 
Another recent line of inquiry concerns embedding the product of graphs, 
which naturally arise as the problem graph of some formulations~\citep{Zaribafiyan2017}.
There are also other general-purpose approaches that break a QUBO problem into smaller parts, 
for example 
by decomposition~\citep{Bian2016} or fixing some variables to their likely value in optimal solutions~\citep{Karimi2017}. 

The line of work that we will explore in this paper consists of embedding problem graphs with more than $M L$ vertices without decompostion, in particular for the case of dense problem graphs.  
In sparse problem graphs, heuristics have been quite successful~\citep{CMR,yang}. 
Among those, one of the most widely used is the CMR algorithm~\citep{CMR}. 
In dense problem graphs, the state-of-the-art consists of using a virtual hardware as an 
intermediary for the embedding. 
The virtual hardware consists of a particular minor of the Chimera graph $C_{M,M,L}$, 
which is chosen to preserve the ability to embed large and dense graphs while making it easy to describe the family of minors that can be obtained from it. 
This idea was pioneered by~\citet{goodrich2018optimizing} 
with a complete bipartite graph $K_{ML,ML}$ as virtual hardware\change{, 
and is currently the state-of-the-art for embedding general problem graphs in QA hardware}. 
Figure~\ref{fig:virtual_hardware} illustrates how $K_{64,64}$ can be embedded in $C_{16,16,4}$: 
each group of 4 vertices is associated with all right partitions of a given row 
or with all left partitions of a given column.

\begin{figure}[h!]
\centering
\includegraphics[width=0.6\columnwidth]{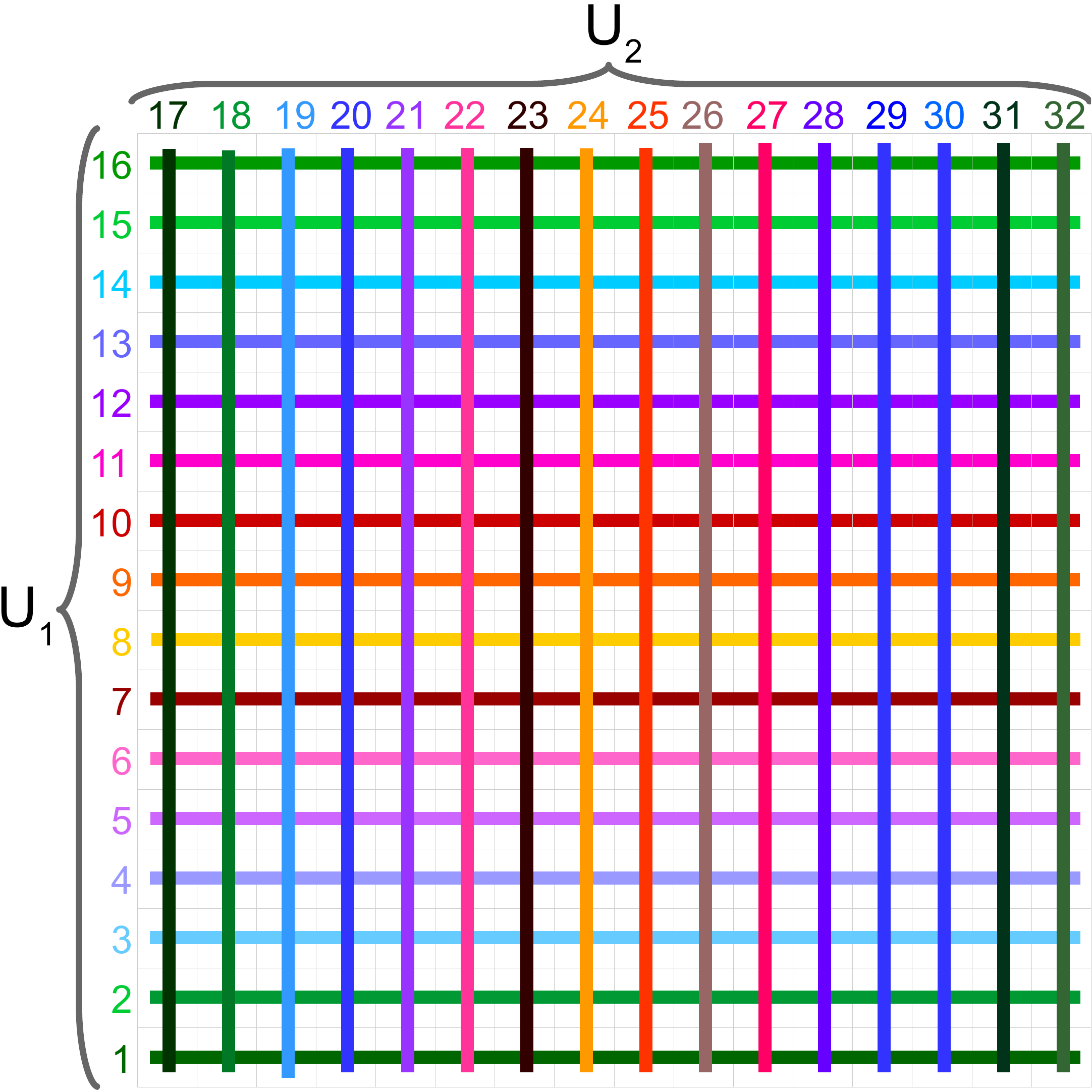}
\caption{An embedding of $K_{64,64}$ in $C_{16,16,4}$ using groups of 4 vertices, 
the first 16 are associated with the right partitions of cells in each row (set $U_1$) and the last 16 with the left partitions of cells in each column (set $U_2$).}
\label{fig:virtual_hardware}
\end{figure}

Any 
minor of $K_{ML,ML}$ 
is isomorphic to a subgraph of one among $ML$ minors of $K_{ML,ML}$~\citep{Hamilton2017}. 
In essence, each vertex $v_i \in V(G)$ is assigned to vertices in either one or both partitions of $H = K_{ML,ML}$ to obtain an embedding of $G$, 
and thus edge $\{u_j, u_k\} \in F(H)$ is contracted if vertex $v_i$ is assigned to both $u_j$ and $u_k$. 

The premise in \citet{goodrich2018optimizing} 
is to assign  vertices of an Odd Cycle Transversal~(OCT) of the problem graph to vertices in both partitions of $K_{ML,ML}$. An OCT of a graph $G$ is a set $T$ 
\change{such that every odd cycle in $G$ has at least one vertex in $T$}, 
hence implying that the removal of $T$ results in a bipartite graph, 
and consequently the remaining vertices are each assigned to a single vertex of $K_{ML,ML}$. 
Those authors  
observed that an OCT of $G$ having minimum size implies the minimum size of a complete bipartite graph in which $G$ can be embedded.
If $T$ is such a minimum size OCT and $V_1$ and $V_2$ are the resulting partitions of the subset of vertices defined by 
$V(G) \setminus T$, then it follows that $G$ can only be embedded in complete bipartite graphs having at least $2 |T| + |V_1| + |V_2|$ vertices.

Later work in \citet{goodrich2018b} used ILP formulations to find an OCT of minimum size, hence minimizing $|T|$ as an approach to determine if a given problem graph is embeddable in $K_{ML,ML}$. They report that customized algorithms can find an OCT of smaller graphs faster. However, they also acknowledge that a general-purpose solver is more effective in cases where the problem graphs are harder to embed. 
\change{We show in this paper that the OCT that embeds $G$ into a complete bipartite graph is not necessarily of minimum size.  This serves as a motivation for considering a different optimization approach}.

\subsection{Related Work}

Some authors have been exploring how to solve a broader class of optimization problems with QA hardware. 
Recent work by \cite{Groebner} uses Groebner bases to represent optimization problems involving polynomial functions of higher order on binary domains as QUBO problems, 
which can then potentially be solved by existing QA hardware. 
Subsequent work by \cite{Graver} uses Graver bases to achieve the same with integer non-linear optimization problems, which may also include constraints.
\change{A number of other quadratizations  that can be applied to such problems is summarized by \cite{Quadratizations}.}

One can also solve a QUBO \change{using} integer linear programming, since the quadratic terms on binary variables can be linearized with an extended formulation~\citep{Padberg1989}. 
In recent work, \cite{Coffrin} uses ILP to verify the solutions generated by QA hardware.

\section{Contributions of This Paper}\label{sec:contr}

We show how Integer Linear Programming~(ILP) can be used as an effective~preprocessing step in AQC, 
especially for problem graphs with more vertices than the largest embeddable cliques. 
Note that we are not interested in solving minor embedding problems that could be nearly as difficult as the corresponding QUBO~problem. 
We focus instead on how classical optimization algorithms could leverage the potential of quantum optimization algorithms. 
Hence, we strive for a balance between computational speed and the ability to embed larger problem graphs by 
defining simple formulations that exploit the structure of Chimera graphs. 
In each of these formulations, we cluster the vertices of a minor of the Chimera graph in some partitions and formulate a problem of deciding how to assign vertices of the problem graph to one or more of such partitions. 
In summary, our main contributions are:
\begin{enumerate}[(i)]
\item We propose Template Embeddings~(TEs) as a generalization of the virtual hardware concept. Each 
template embedding is a minor of the Chimera graph that can embed a variety of problem graphs 
with few edge contractions. 
We study \newchange{two} classes of those: the Bipartite TE~(BTE), as the virtual hardware in~\citet{goodrich2018optimizing}; 
\newchange{as well as the Quadripartite TE~(QTE), as a generalization of the former}.

\item \change{We show that every embedding in BTE is associated with an OCT, but OCTs of minimum
size do not certify that a given problem graph cannot be embedded in 
BTE.}
\item We 
present ILP formulations to determine how to embed a problem graph on the minor of each template embedding with competitive results. For BTE, the formulation provides a certificate of 
embeddability or lack thereof. 
\end{enumerate}

\section{Bipartite Template Embedding}\label{sec:bte}

For a Chimera graph $C_{M,M,L}$, 
BTE consists of the minor $K_{ML,ML}$ used as virtual hardware by \citet{goodrich2018optimizing}, 
in which the vertices of the hardware graph are partitioned into sets $U_1$ and $U_2$ of size $ML$ each.
The construction of BTE is 
described in Figure~\ref{fig:virtual_hardware}.

In order to embed a problem graph $G$ in BTE, 
we need to determine which vertices of $V(G)$ should be assigned to partitions $U_1$ and $U_2$. 
A vertex assigned to a single partition should only be adjacent to vertices assigned to the other partition. 
If assigning all vertices is proven impossible, then $G$ cannot be embedded in BTE.  
If all vertices are assigned to at least one partition, then the solution defines a valid embedding. 

Before formulating the embeddability of a problem graph $G$ in BTE, 
we 
\change{discuss how BTE embeddings relate to OCTs and OCTs of minimum size.} 

\subsection{OCTs and Bipartite Embedding}

In this section, we characterize the relationship between OCTs of a graph $G$ 
and the embedding of $G$ in a complete bipartite graph $K_{m_1,m_2}$. 
More specifically, 
we show that:
\begin{enumerate}[(i)]
\item The set of vertices $S \subseteq V(G)$ that are assigned to both partitions of $K_{m_1,m_2}$ in any embedding of $G$ is a superset of some OCT $T \subset V(G)$, 
but $T$ may be a proper subset of $S$ in each of the possible embeddings.

\item  The largest OCT $T$ contained in $S$ may not be an OCT of minimum size in any of the possible embeddings.
\end{enumerate} 

Those results are shown in the following propositions.

\begin{proposition}\label{prop:supsetOCT}
For any embedding of a graph $G$ in $K_{m_1,m_2}$, 
the set of vertices $S \subseteq V(G)$ assigned to both partitions of $K_{m_1,m_2}$ is such that there is an OCT $T$ of $G$ for which $T \subseteq S$. 
In some cases, $T \subset S$ in every possible embedding.
\end{proposition}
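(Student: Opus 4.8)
The plan is to prove the containment and the strictness separately. For the containment, I would show that $S$ is itself an OCT, so that $T=S$ (or any OCT contained in it) witnesses the claim. The key step is structural: every vertex $v \in V(G)\setminus S$ has its chain $U_v$ contained in a single partition of $K_{m_1,m_2}$, so I can split $V(G)\setminus S$ into the set $V_1$ of vertices embedded inside the first partition and the set $V_2$ embedded inside the second. Since $K_{m_1,m_2}$ has no edges inside either partition, no edge of $G$ can join two vertices of $V_1$ (both chains would lie on the same side and could not be connected by a hardware edge), and likewise none can join two vertices of $V_2$. Hence every edge of the induced subgraph $G - S$ runs between $V_1$ and $V_2$, so $(V_1,V_2)$ is a bipartition of $G-S$ and $S$ is an OCT. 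The only point to watch is that a single-partition chain must be a single vertex, since an edgeless induced subgraph is connected only if trivial, but the bipartiteness conclusion does not even need this.

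For the strictness, I would take $T$ to be a \emph{minimum} OCT and exhibit an instance in which every embedding forces $S \supsetneq T$. The construction I have in mind is the star $G = K_{1,p}$ with $p$ even and $p>2$, embedded into the square $K_{m,m}$ with $m = p/2 + 1$; here the minimum OCT is $T=\emptyset$. First I would argue that no embedding with $S=\emptyset$ fits: with $S=\emptyset$ the center lies in one partition, forcing all $p$ leaves into the other (a leaf cannot share a partition with the center, its only neighbor), so one side has size $p > m$. Then I would display the embedding with $S=\{\text{center}\}$: placing the center in both partitions lets the leaves be split evenly, using exactly $m$ vertices on each side. Since any feasible embedding must place the center in $S$ (otherwise the leaf side again has size $p$), every embedding satisfies $S \supsetneq \emptyset = T$, which is exactly the claim.

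The hard part is the strictness assertion: it is not enough to display one embedding with $S \supsetneq T$, since I must rule out \emph{all} embeddings with $S = T$. For the star this reduces to the clean size count above, but the content worth highlighting is conceptual---minimizing the OCT minimizes the total number of occupied qubits, which equals $n + |S|$, yet says nothing about how evenly those qubits split between the two partitions. The star's center has all its neighbors on one side, so without assigning it to both partitions the load cannot be balanced, and the balance constraint, rather than the total count, is what decides fitting into a square $K_{m,m}$. This is precisely why the minimum-OCT criterion of \citet{goodrich2018optimizing} is insufficient, and it motivates the optimization approach developed in the remainder of the paper.
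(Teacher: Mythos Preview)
Your proposal is correct and matches the paper's approach: the paper also shows that $S$ itself is an OCT (arguing by contradiction that any odd cycle must meet $S$, which is the contrapositive of your direct bipartition $(V_1,V_2)$ of $G-S$), and for strictness it uses precisely the star $K_{1,m}$ embedded in $K_{m_1,m_2}$ with $m>\max\{m_1,m_2\}$ and $m\le m_1+m_2-2$, of which your square instance $K_{1,p}\hookrightarrow K_{p/2+1,\,p/2+1}$ is the boundary case. Your added remark that the obstruction is balance rather than total qubit count is exactly the motivation the paper develops in the subsequent example.
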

\begin{proof}
First we show that the set of vertices $S$ assigned to both partitions in any embedding is a superset of an OCT $T$. Let us suppose, for contradiction, that none of the vertices incident to an odd cycle of $G$, say $v_1 v_2 \ldots v_k v_1$ for $k \geq 3$ and odd, are in $S$. 
Since each of those vertices is assigned to one partition of $K_{m_1,m_2}$, whereas consecutive vertices should necessarily be in different partitions, 
then one partition is assigned to each $v_i$ with even $i$ 
and the other partition is assigned to each $v_i$ with odd $i$. 
However, since $v_k$ and $v_1$ are adjacent and assigned to the same partition, 
then we do not have a valid embedding, and we reach a contradiction. Hence, $S$ contains 
at least one vertex of every odd cycle in $G$ and is indeed an OCT. This proves the first claim.

Next we show that the embedding of certain graphs implies that we assign to both partitions a vertex that is not incident to any odd cycle. 
In particular, let us consider a \emph{star graph} $K_{1,m}$, 
where a single vertex $v$ is adjacent to all the other vertices and 
consequently the only OCT is an empty set. 
If $m>\max\{m_1,m_2\}$ but $m \leq m_1 + m_2 -2$, then we can only embed $K_{1,m}$ in $K_{m_1,m_2}$ if vertex $v$ is assigned to both partitions, 
in which case each of the remaining $m$ vertices can then be assigned to either partition of $K_{m_1,m_2}$. 
\end{proof}

\change{In other words, 
there is always an OCT in the set of vertices $S$ that should be assigned to both partitions of $K_{m_1,m_2}$. 
However, that set may also contain other vertices. Proposition~\ref{prop:supsetOCT} shows that some of those vertices may not be incident to any OCT. In what follows, we illustrate that the remaining vertices may define an OCT that is not of minimum size.}

\change{
Consider the graph $G$ in Figure~\ref{fig:proof}, where vertices $v_1$ and $v_2$ are adjacent and each is also adjacent to all vertices in set \newchange{$V_A := \{ v_4, v_5, v_6, v_7\}$}, whereas $v_3$ is adjacent to all vertices in sets $V_A $ and \newchange{$V_B := \{ v_{8}, v_{9}, v_{10}, v_{11}\}$}. 
Vertices $v_1$ and $v_2$ are incident to all odd cycles of $G$, such as $v_1 v_4 v_2 v_1$, 
and consequently we only need to remove one of them to obtain a graph with no odd cycles. 
In addition, 
$v_3$ is incident to some odd cycles, such as $v_1 v_4 v_3 v_5 v_2 v_1$. 
Hence, $\{v_1\}$ and $\{v_2\}$ are OCTs of minimum size, but $\{v_1, v_3\}$ is also an OCT. 
}

\begin{figure}[h!]
\centering
\includegraphics[width=0.6\columnwidth]{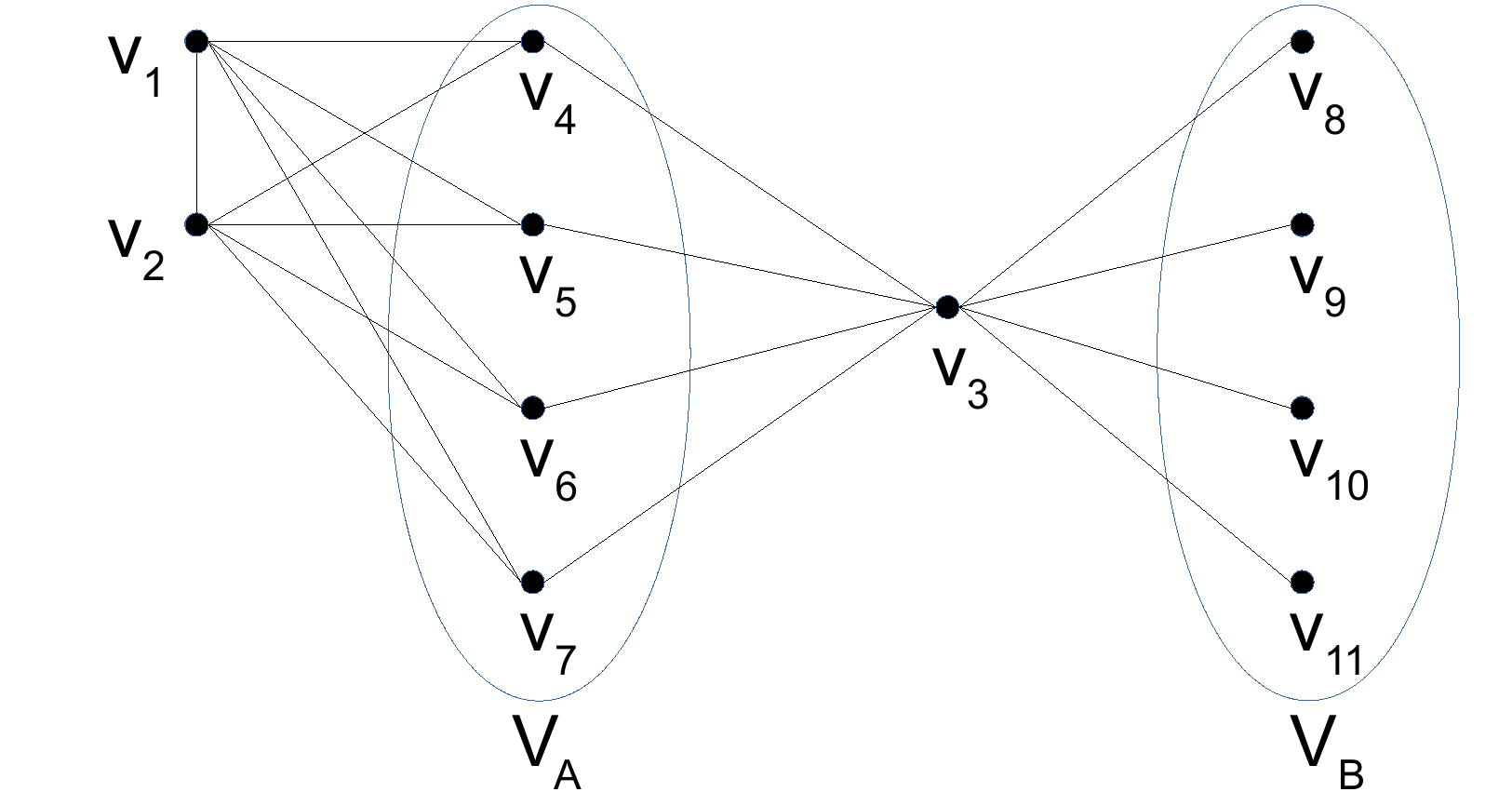}
\caption{\change{Example of a graph $G$ for which assigning vertices in an OCT of minimum size to both partitions of a bipartite graph $H$ may not suffice to obtain a valid embedding of $G$ in $H$.}}
\label{fig:proof}
\end{figure} 

\begin{figure}[h!]
\centering
\includegraphics[width=0.7\columnwidth]{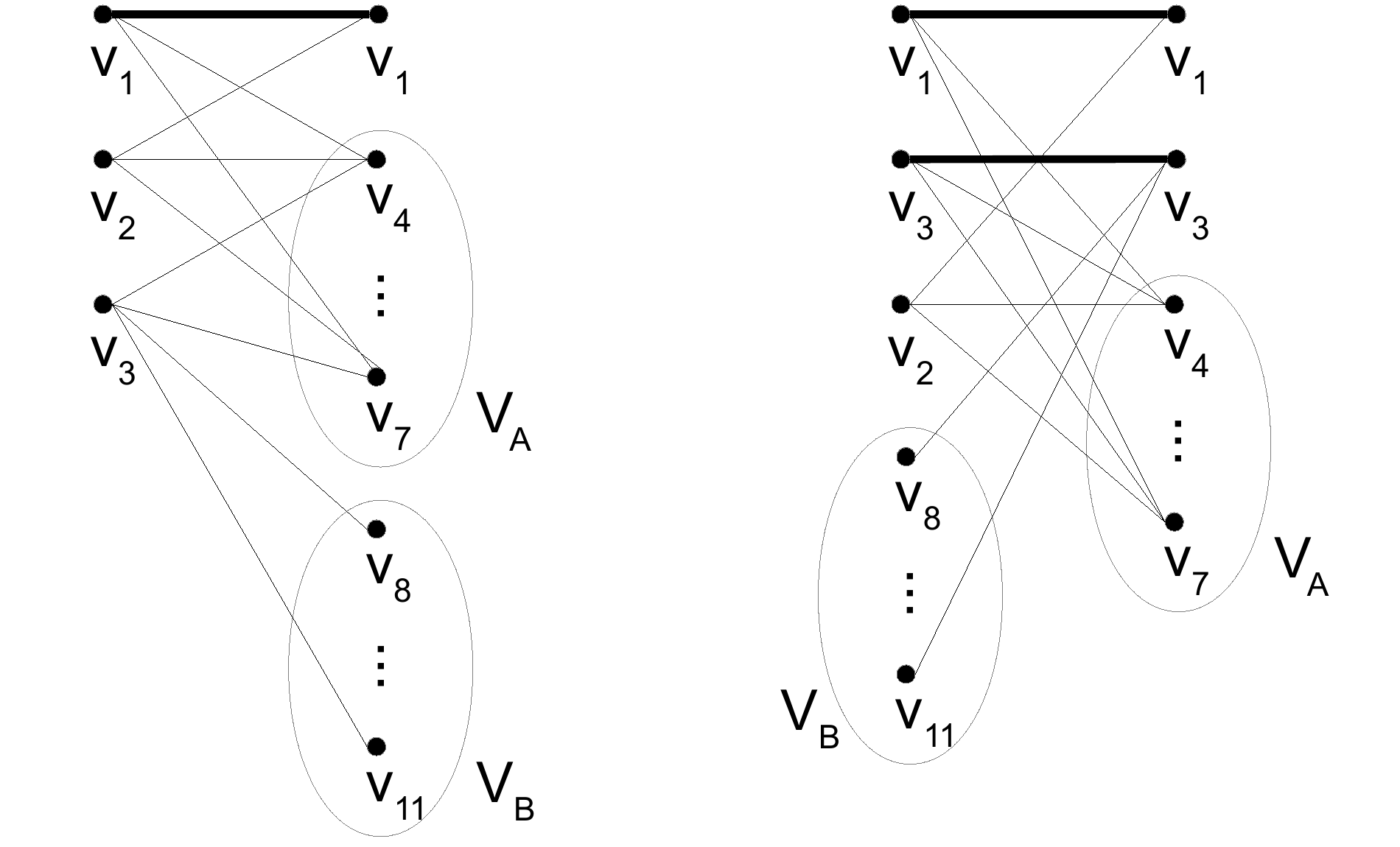}
\caption{\change{Embeddings of graph $G$ from Figure~\ref{fig:proof} in a bipartite graph. On the left, we assign only the vertices of a minimum size OCT $\{v_1\}$ to both partitions and consequently minimize the number of vertices used in both partitions to \newchange{12}. On the right, we assign the vertices of a larger OCT $\{ v_1, v_3\}$ to both partitions and consequently reduce the largest number of vertices assigned to either partition \newchange{from 9 to 7}.}}
\label{fig:embeddings}
\end{figure} 

\change{Let us first consider what happens if we only assign an OCT of minimum size to both partitions of $K_{m_1,m_2}$, which would imply $S = T = \{ v_1 \}$ or $S = T = \{ v_2 \}$. Since $G - v_1$ and $G - v_2$ are isomorphic, we only consider assigning $v_1$ to both partitions and the vertices in $G - v_1$ to a single partition each. 
In that case, vertex $v_2$ and the vertices in set $V_A$ are necessarily in different partitions. Since vertex $v_3$ is adjacent to the vertices in $V_A$, it follows that $v_3$ is in the same partition as $v_2$. Since the vertices in $V_B$ are all adjacent to $v_3$, then the vertices in $V_B$ are in the same partition as those in $V_A$. Consequently, we need to assign $\{ v_2, v_3 \}$ to one partition and $V_A \cup V_B$ to the other partition. 
Therefore, we can only embed graph $G$ in $K_{m_1,m_2}$ by 
assigning the vertices in a minimum size OCT to both partitions 
and the rest to a single partition each if $m_1 \geq 3$ and 
\newchange{$m_2 \geq |V_A|+|V_B|+1 = 9$} or vice-versa. This is illustrated on the left of  Figure~\ref{fig:embeddings}.}

\change{ Now let us consider what happens if we assign a larger OCT to both partitions of $K_{m_1,m_2}$, 
say $\{v_1, v_3\}$.
In the graph $G \setminus \{v_1, v_3\}$, vertex $v_2$ and the vertices in $V_A$ are again in different partitions. Since the vertices in $V_B$ are not adjacent to the other remaining vertices, then one possible partitioning of the remaining vertices is $\{ v_2 \} \cup V_B$ and $V_A$, hence implying that we can embed $G$ in $K_{m_1,m_2}$ if \newchange{$m_1 \geq |V_B|+3 = 7$} and \newchange{$m_2 \geq |V_A|+2 = 6$} or vice-versa. This is illustrated on the right of Figure~\ref{fig:embeddings}.}

\change{If \newchange{$m_1 = m_2 = 8$}, 
then assigning only the vertices of a minimum size OCT $\{ v_1 \}$ to both partitions and 
assigning $V_A \cup V_B$ to a single partition is not feasible since no partition has \newchange{9} vertices.  
However, it is possible to assign only the vertices of the OCT $\{ v_1, v_3 \}$ to both partitions and obtain an embedding, since in such case we only need to assign \newchange{7} vertices to one partition of $K_{m_1,m_2}$ and \newchange{6} vertices to the other partition.}

The rationale for finding an OCT $T$ of minimum size is that it allows $G$ to be 
embedded in the smallest complete bipartite graph, 
regardless of the size of each partition in which we are embedding. 
However, the size of the partitions matter. 
If we denote the partitions of $G \setminus T$ as the sets of vertices $V_1$ and $V_2$, then $G$ can 
only be embedded with such partitions in $K_{m_1,m_2}$ if $\min(m_1,m_2) \geq |T|$ and 
$\max(m_1, m_2) \geq |T| + |V_1| + |V_2|$. 

The graph in Figure~\ref{fig:proof}  
can be embedded with an OCT of minimum size in \newchange{$K_{3,9}$}, which has only \newchange{12} vertices.  
It can also be embedded with a larger OCT in \newchange{$K_{6,7}$}, which has \newchange{13} vertices. 
\newchange{If we were to consider a Chimera graph $C_{16,16,4}$ and tried to embed a similar graph in which $|V_A|=|V_B|=32$, only the second embedding above would be possible in BTE.}

\subsection{Exact Bipartite Embedding}

The formulation below determines if a problem graph $G$ is embeddable in BTE. 

\paragraph{Decision Variables} For each vertex $v_i \in V(G)$ and $k \in \{1,2\}$, 
let $y_{i,k} \in \{0,1\}$ be a binary variable for whether vertex $v_i$ is assigned to partition $U_k$ 
and let $y'_i \in \{0,1\}$ be a binary variable denoting whether $v_i$ is assigned to any partition. 

\paragraph{Objective Function}
The following expression aims at assign as many vertices as possible:
\[
\max \sum_{i=1}^n y'_i.
\]
If the ILP solver reports an upper bound lower than $n$, then it is not possible to embed $G$ in BTE.

\paragraph{Constraints}
For each vertex $v_i \in V(G)$, we associate both types of decision variables as follows:
\begin{equation}\label{eq:any_assignment}
y'_i \leq y_{i,1} + y_{i,2}.
\end{equation}

For each partition $U_k$, $k \in \{1,2\}$, 
no more than $ML$ vertices of $V(G)$ should be assigned to it:
\begin{equation}\label{eq:max_partition}
\sum_{i=1}^n y_{i,k} \leq M L.
\end{equation}

For each edge $\{ v_i, v_j \} \in E(G)$, $i<j$, vertices $v_i$ and $v_j$ should not be assigned to a single and same partition: 
\begin{equation}\label{eq:u1_exclusion}
y_{i,1} + y_{j,1} - y_{i,2} - y_{j,2} \leq 1
\end{equation}
\begin{equation}\label{eq:u2_exclusion}
y_{i,2} + y_{j,2} - y_{i,1} - y_{j,1} \leq 1.
\end{equation}
The constraints above are the canonical cuts on the unit hypercube defined by the binary variables~\citep{CanonicalCuts}. 
Each corresponds to the tightest single inequality on such space because it separates a single combination of values for those variables and is tight for each combination of values that differs in only one variable. 
For example, the first inequality above separates $(y_{i,1}, y_{i,2}, y_{j,1}, y_{j,2}) = (1, 0, 1, 0)$ from the feasible set while holding at equality for the adjacent assignments $(y_{i,1}, y_{i,2}, y_{j,1}, y_{j,2}) = (0, 0, 1, 0), (1, 1, 1, 0), (1, 0, 0, 0),$ and $(1, 0, 1, 1)$.

\begin{proposition}[Certificate of Embeddability]\label{PropEmbed}
Graph $G$ is embeddable in BTE if and only if there is a solution to the 
ILP formulation with objective value $|V(G)|$.
\end{proposition}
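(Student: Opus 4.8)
The plan is to prove the biconditional by exhibiting a value-preserving correspondence between valid BTE embeddings of $G$ and feasible solutions of the ILP that attain objective value $n := |V(G)|$. The argument rests on reading each $y_{i,k}$ as a membership indicator---$y_{i,k}=1$ exactly when the image of $v_i$ occupies a vertex of partition $U_k$---and $y'_i$ as the indicator that $v_i$ is placed in at least one partition. Under this dictionary the proposition reduces to checking that each constraint encodes a defining property of a BTE embedding and that $\sum_i y'_i$ counts the placed vertices.

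For the direction that a solution of value $n$ yields an embedding, I would begin from a feasible point with $\sum_i y'_i = n$. As each summand is at most $1$, this forces $y'_i = 1$ for every $i$, and then \eqref{eq:any_assignment} gives $y_{i,1}+y_{i,2}\ge 1$, so every vertex is assigned to at least one partition. The capacity constraints \eqref{eq:max_partition} ensure that at most $ML$ vertices land in each of $U_1$ and $U_2$, so the assigned vertices can be placed injectively onto distinct vertices of each partition of $K_{ML,ML}$. It then remains to verify the three defining properties of a minor embedding: that each image set is connected (immediate, since a vertex placed in both partitions occupies one adjacent vertex in each, and a vertex placed in a single partition is a single vertex); that the image sets are disjoint (from injectivity); and that every edge is realized, which is exactly where \eqref{eq:u1_exclusion}--\eqref{eq:u2_exclusion} enter.

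For the converse, I would start from any valid BTE embedding and set $y_{i,k}=1$ iff $v_i$ occupies a vertex of $U_k$, with $y'_i = 1$ for all $i$. I would first note that it is without loss of generality to assume each $v_i$ uses at most one vertex per partition: two vertices in the same partition are non-adjacent in $K_{ML,ML}$, so extra vertices inside a partition neither improve connectivity of the image set nor enlarge the set of edges it can realize, while only consuming capacity. With the indicators so defined, \eqref{eq:any_assignment} holds because every vertex is placed, \eqref{eq:max_partition} holds because each partition has only $ML$ vertices, and $\sum_i y'_i = n$; the edge constraints then hold by the realizability analysis below.

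The main obstacle, and the step I would treat most carefully, is the exact equivalence between the canonical cuts \eqref{eq:u1_exclusion}--\eqref{eq:u2_exclusion} and edge realizability. For an edge $\{v_i,v_j\}$ with both endpoints placed, the edge is realized iff one endpoint has a vertex in $U_1$ and the other a vertex in $U_2$, i.e. iff $(y_{i,1}\wedge y_{j,2})\vee(y_{i,2}\wedge y_{j,1})$ holds. A short case check over the assignment patterns shows this disjunction fails precisely for $(y_{i,1},y_{i,2},y_{j,1},y_{j,2})=(1,0,1,0)$ and $(0,1,0,1)$---both endpoints in a single common partition---and that these are exactly the integer points cut off by \eqref{eq:u1_exclusion} and \eqref{eq:u2_exclusion}, respectively, while every other pattern satisfies both inequalities. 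Finally, I would observe that the objective is genuinely a certificate: since each summand is at most $1$, any reported optimum below $n$ means some vertex cannot be placed, which by the correspondence is precisely the assertion that $G$ admits no BTE embedding.
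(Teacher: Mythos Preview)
Your proposal is correct and follows essentially the same approach as the paper: both arguments set up the dictionary $y_{i,k}\leftrightarrow$ ``$v_i$ occupies partition $U_k$'', observe that without loss of generality each vertex uses at most one vertex per partition, and then verify that constraints \eqref{eq:any_assignment}--\eqref{eq:u2_exclusion} correspond exactly to the defining conditions of a BTE embedding. Your treatment is in fact slightly more explicit than the paper's---you spell out the case analysis showing that \eqref{eq:u1_exclusion}--\eqref{eq:u2_exclusion} cut off precisely $(1,0,1,0)$ and $(0,1,0,1)$, and you address connectivity and disjointness of the image sets directly---whereas the paper handles the reverse implication by contradiction and leaves the edge-constraint equivalence at the level of ``adjacent vertices are not assigned solely to $U_1$ or solely to $U_2$.''
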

\begin{proof}
\change{We start with the premise of assigning vertices to partitions. In order to embed $G$ in BTE, every vertex $v$ of $G$ should be assigned to one or more vertices of BTE. If $v$ is assigned to vertices $u_1$ and $u_2$ in BTE and those vertices are in the same partition, then $u_1$ and $u_2$ have the same neighbors and we can obtain an equivalent embedding by assigning $v$ to only one of them. 
Therefore, 
every embedding of $G$ in BTE can be reduced to an embedding in which each vertex of $G$ is assigned to at most one vertex of each partition and the premise of assigning vertices of $G$ to partitions of BTE is valid.}

\change{Next we show that any graph $G$ that can be embedded in BTE defines a solution to the ILP formulation. 
Based on the discussion above, for any such graph we can define a mapping of every one of its vertices to partitions $U_1$ and $U_2$ of BTE that corresponds to an embedding of $G$ in BTE, 
which implies that $y_{i,k} = 1$ if $v_i \in V(G)$ is assigned to partition $U_k$, $k \in \{1,2\}$, and $y_{i,k} = 0$ otherwise.
By definition, we can infer that three constraints of the ILP formulation are always satisfied with such a mapping. 
First, no more than $ML$ vertices of $G$ are assigned to either $U_1$ or $U_2$ in any embedding in BTE because that is the number of vertices in each of those partitions, 
and thus any such embedding satisfies  constraint~\eqref{eq:max_partition}. 
Second, no pair of adjacent vertices in $G$ are assigned only to the same partition in BTE because the corresponding vertices in BTE would not be adjacent, 
and thus any embedding in BTE would not assign any such pair only to partition $U_1$ as prevented by constraint~\eqref{eq:u1_exclusion} or to partition $U_2$ as prevented by constraint~\eqref{eq:u2_exclusion}. 
In addition, 
since every vertex $v_i \in V(G)$ is assigned to at least one partition, then it follows that $y_{i,1}+y_{i,2} \geq 1$ and thus there is a feasible solution in which $y_i' = 1 ~ \forall v_i \in V(G)$ and the objective function value of $|V(G)|$ is attainable because constraint~\eqref{eq:any_assignment} is not violated by such assignments. 
Therefore, there is a solution of value $|V(G)|$ for every embedding of $G$ in BTE.
}

\change{Finally, we show that any graph $G$ for which we cannot find such a solution to the ILP formulation cannot be embedded in BTE. 
Since a solution with all variables equal to zero is feasible, we just need to consider the case in which the optimal value is less than $|V(G)|$. We will prove this by contradiction. Suppose, for contradiction, 
that there is an optimal solution with value $|V(G)|$ for a graph $G$ that cannot be embedded in BTE. That would imply that each vertex of $G$ is assigned to at least one partition due to constraint~\eqref{eq:any_assignment} and optimality.  Further, no more than $ML$ vertices are assigned to each partition due to satisfaction of constraint~\eqref{eq:max_partition}, and adjacent vertices are not assigned solely to vertices in $U_1$ due to satisfaction of constraint~\eqref{eq:u1_exclusion} or to vertices in $U_2$ due to satisfaction of constraint~\eqref{eq:u2_exclusion}. 
Consequently, such a solution would contradictorily define a valid embedding of $G$. Therefore, the value of any solution to the ILP formulation associated with a graph $G$ that is not embeddable in BTE is less than $|V(G)|$.
}
\end{proof}

An important implication of Proposition~\ref{PropEmbed} is that we can terminate the search process 
and conclude that $G$ is not embeddable in BTE once the search 
determines that there is no solution with objective value $|V(G)|$.  
Search algorithms 
incorporate a number of domain reduction techniques, 
which 
lead to considerable pruning of the search space, 
and this can be favorably exploited to produce a certificate of 
non-embeddability.  Furthermore, any solution with objective value $|V(G)|$ 
is an optimal solution, and the search can be terminated if one of those is found.

\section{Quadripartite Template Embedding}\label{sec:qte}

We define the Quadripartite Template Embedding~(QTE) as a minor of the Chimera graph with vertices partitioned into sets $U_1$, $U_2$, $U_3$, and $U_4$. 
Each vertex in $U_1$ is adjacent to all vertices in $U_2$, those in $U_2$ are also each adjacent to a distinct vertex in $U_3$, and those in $U_3$ are also adjacent to all vertices in $U_4$. In other words, the subgraph induced on $U_1 \cup U_2$ and $U_3 \cup U_4$ are both complete bipartite graphs, 
and the subgraph induced in $U_2 \cup U_3$ is a perfect matching. 
In fact, embedding on QTE generalizes embedding on BTE, since BTE is a minor of QTE after contracting all edges between sets $U_2$ and $U_3$: 
$U_1 \cup U_4$ defines one partition of BTE in that case and the other partition is defined by the vertices resulting from contracting the edges between $U_2$ and $U_3$. 
For a Chimera graph $C_{M,M,L}$ with $M$ even and thus $P:=M/2$ integer, the size of those partitions are: $|U_1|=PL$, $|U_2|=ML$, $|U_3|=ML$, and $|U_4|=PL$.
Figure~\ref{fig:qte_B} illustrates that minor of $C_{16,16,4}$.

\change{The set of minors of QTE is a superset of the minors of BTE. QTE's minors include some larger graphs with fewer adjacencies that would not be possible to embed in BTE. In fact, BTE is a minor of QTE in which the edges between vertices in partitions $U_2$ and $U_3$ are contracted. QTE offers the flexibility that two adjacent vertices in $U_2$ and $U_3$ can be mapped to distinct vertices in $G$. }

\begin{figure}[h!]
\centering
\includegraphics[width=0.6\columnwidth]{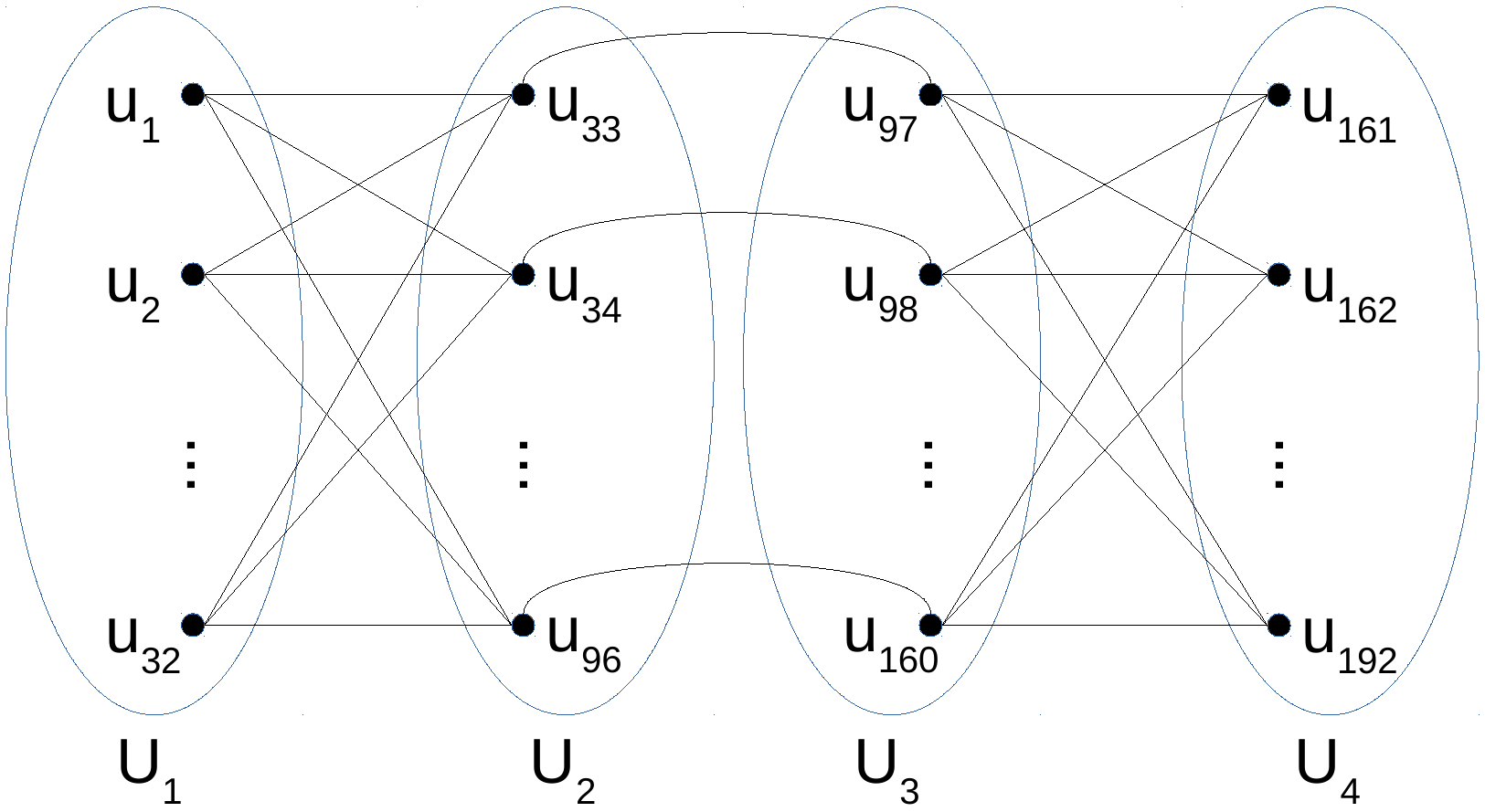}
\caption{QTE minor of Chimera graph $C_{16,16,4}$.}
\label{fig:qte_B}
\end{figure} 

QTE can be obtained from $C_{M,M,L}$ as follows. 
Set $U_1$ consists of $P$ groups of $L$ vertices, 
each group obtained by contracting the $L$ vertices of the right partitions of one of the top $P$ rows of the $M \times M$ grid. 
Set $U_2$ consists of $M$ groups of $L$ vertices, 
each group obtained by contracting the $L$ vertices of the left partitions in the top $P$ rows of one of the $M$ columns of the grid. 
Set $U_3$ consists of $M$ groups of $L$  vertices, 
each group obtained by contracting the $L$ vertices of the left partitions in the bottom $P$ rows of one of the $M$ columns of the grid. 
Set $U_4$ consists of $P$ groups of $L$ vertices, 
each group obtained by contracting the $L$ vertices of the right partitions of one of the bottom $P$ rows of the grid. 
Figure~\ref{fig:qte_A} illustrates QTE in $C_{16,16,4}$.

\begin{figure}[h!]
\centering
\includegraphics[width=0.6\columnwidth]{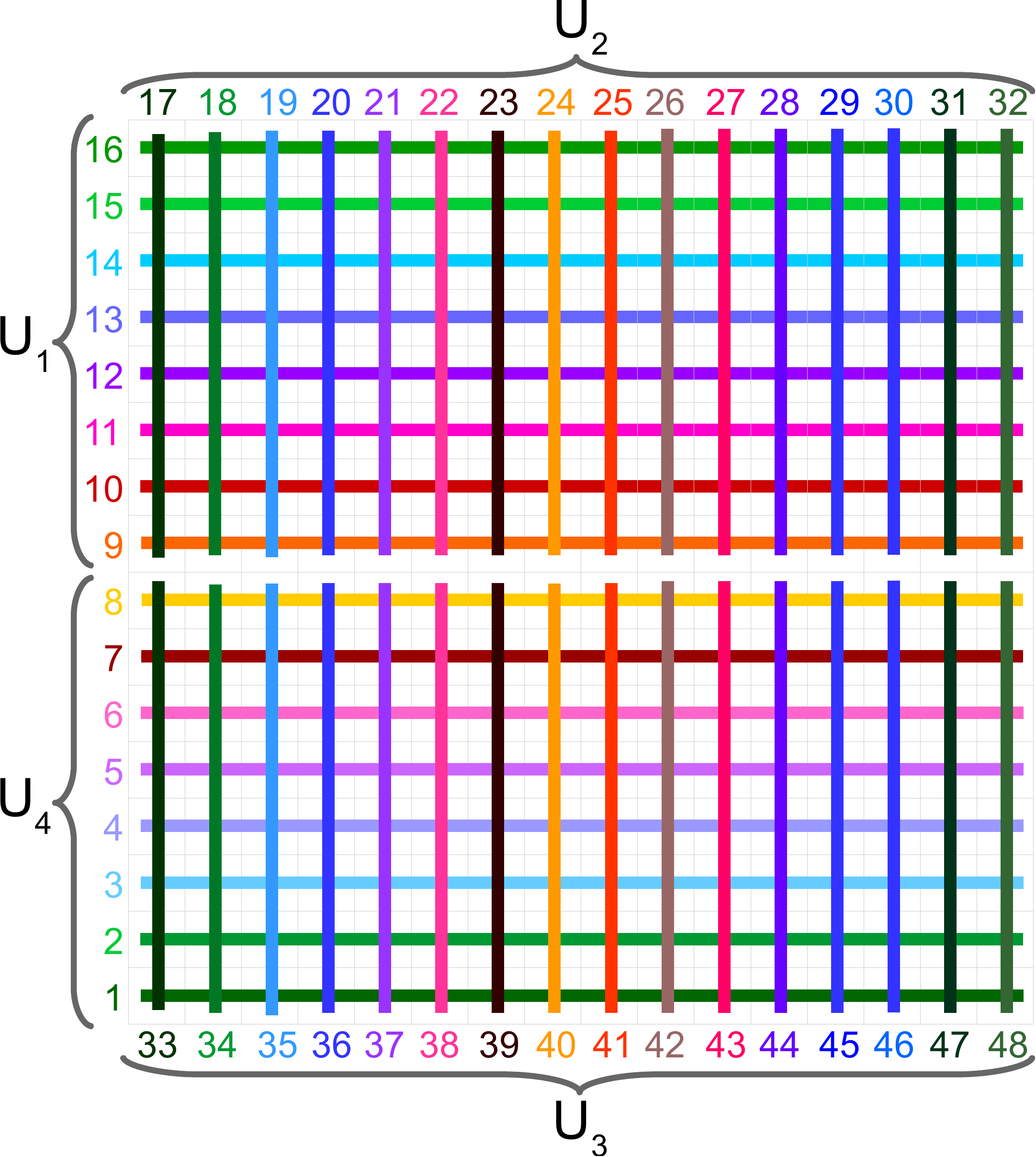}
\caption{
QTE in $C_{16,16,4}$, defining partition $U_1$ with the top 8 horizontal groups, $U_2$ with the top 16 vertical groups,  $U_3$ with the bottom 16 vertical groups, and $U_4$ with the bottom 8 horizontal groups. Each group has 4 vertices. }
\label{fig:qte_A}
\end{figure}

In order to embed a problem graph in QTE, 
each vertex should be assigned to a sequence of adjacent partitions 
and each pair of adjacent vertices $v_i$ and $v_j$ should be assigned to vertices $u_i$ and $u_j$ that are adjacent in QTE. 
\change{For simplicity, 
we only consider that $v_i$ and $v_j$ have been assigned to vertices $u_i$ and $u_j$ that are adjacent in QTE if those vertices are in distinct partitions that together induce a complete bipartite graph}, 
i.e., $U_1$ and $U_2$ or $U_3$ and $U_4$. 
Hence, 
we ignore the possibility of assuming $u_i$ and $u_j$ adjacent if one of these vertices is in partition $U_2$ and the other vertex is in partition $U_3$. 
Otherwise, we would need to explicitly assign the vertices of the problem graph to specific vertices in those partitions 
instead of merely deciding that the vertices of the problem graph are assigned to some vertex in the partition, 
which would make the formulation considerably more complex. 
\change{More specifically, we could potentially decide the particular vertex of the harware graph $H$ that is associated to each vertex assigned to sets $U_2$ and $U_3$ in order to leverage the fact that each vertex of $U_2$ is adjacent to a vertex of $U_3$. However, the number of decision variables would be substantially larger in that case. } 
Given our aim for simplicity, 
the formulation below does not provide a certificate of embeddability in QTE \change{because we only exploit adjacencies between $U_2$ and $U_3$ if a same vertex is assigned to both}.

\paragraph{Decision Variables} For each vertex $v_i \in V(G)$ and $k \in \{1,2,3,4\}$, 
let $y_{i,k} \in \{0,1\}$ be a binary variable for whether vertex $v_i$ is assigned to partition $U_k$,  
and let $y'_i \in \{0,1\}$ be a binary variable denoting whether $v_i$ is assigned to any partition. 
For each edge $\{v_i, v_j\} \in E(G)$, assuming $i < j$, let $z_{i,j}^k \in \{0,1\}$ be an 
auxiliary binary variable implying that the adjacency between vertices $v_i$ and $v_j$ is ensured 
by assigning vertex $v_i$ to partition $U_k$ and vertex $j$ to the partition in which all vertices 
are adjacent to those in $U_k$.

\paragraph{Objective Function} We maximize vertices assigned:
\[
\max \sum_{i=1}^n y'_i.
\]

\paragraph{Constraints}
The first constraint associates the first two types of variables for each vertex $v_i \in V(G)$, as in BTE:
\[
y'_i \leq \sum_{k=1}^4 y_{i,k}.
\]

For each partition $U_k$, $k \in \{1,2,3,4\}$, 
the number of vertices assigned to $U_k$ is bounded by the size of that partition:
\[
\sum_{i = 1}^n y_{i,k} \leq |U_k|.
\]

For each vertex $v_i \in V(G)$, 
we want the set of partitions to which $v_i$ is assigned to be pairwise contiguous. 
We formulate that with constraints preventing each possible discontinuity: 
(i) assigning $v_i$ to $U_1$ and $U_3$ implies that $v_i$ is also assigned to $U_2$; 
(ii) assigning $v_i$ to $U_1$ and $U_4$ implies that $v_i$ is also assigned to $U_2$ and $U_3$; 
and (iii) assigning $v_i$ to $U_2$ and $U_4$ implies that $v_i$ is also assigned to $U_3$. 
Hence, we use canonical cuts on the unit hypercube in the corresponding subspaces to exclude the assignments $(y_{i,1}, y_{i,2}, y_{i,3}) = (1,0,1)$ for (i); $(y_{i,1}, y_{i,2}, y_{i,4)}) = (1,0,1)$ and $(y_{i,1}, y_{i,3}, y_{i,4)}) = (1,0,1)$ for (ii); and $(y_{i,2}, y_{i,3}, y_{i,4}) = (1,0,1)$ for (iii):
\[
y_{i,1} + y_{i,3} - y_{i,2} \leq 1
\]
\[
y_{i,1} + y_{i,4} - y_{i,2} \leq 1
\]
\[
y_{i,1} + y_{i,4} - y_{i,3} \leq 1
\]
\[
y_{i,2} + y_{i,4} - y_{i,3} \leq 1.
\]
For (ii), 
it would suffice to exclude $(y_{i,1}, y_{i,2}, y_{i,3}, y_{i,4}) = (1,0,0,1)$ with $y_{i,1} + y_{i,4} - y_{i,2} - y_{i,3} \leq 1$ because the other cases are covered. 
However, by summing the two inequalities used for (ii)  we obtain the implied inequality $2 y_{i,1} + 2 y_{i,4} - y_{i,2} - y_{i,3} \leq 2$, which is stronger than $y_{i,1} + y_{i,4} - y_{i,2} - y_{i,3} \leq 1$ on continuous domains in $[0,1]$ and excludes additional fractional values such as $(y_{i,1}, y_{i,2}, y_{i,3}, y_{i,4}) = (1,0.5,0.5,1)$. 
Note that a formulation that has a smaller feasible set when the binary variables on $\{0,1\}$ are relaxed to continuous variables on $[0,1]$ is considered as stronger and often is solved faster. 

For each edge $\{v_i, v_j\} \in E(G)$, $i < j$, we want vertex $v_i$ assigned to at least one partition $U_k$ such that vertex $v_j$ is assigned to the corresponding partition $U_l$ where the set of vertices $U_k \cup U_l$ induces a complete bipartite graph. 
In other words, we want $v_i$ and $v_j$ respectively assigned to either (i) $U_1$ and $U_2$; (ii) $U_2$ and $U_1$; (iii) $U_3$ and $U_4$; or (iv) $U_4$ and $U_3$:
\[
y_{i,1} \geq z_{i,j}^1,\qquad
y_{j,2} \geq z_{i,j}^1
\]
\[
y_{i,2} \geq z_{i,j}^2,\qquad
y_{j,1} \geq z_{i,j}^2
\]
\[
y_{i,3} \geq z_{i,j}^3,\qquad
y_{j,4} \geq z_{i,j}^3
\]
\[
y_{i,4} \geq z_{i,j}^4,\qquad
y_{j,3} \geq z_{i,j}^4
\]
\[
\sum_{k=1}^4 z_{i,j}^k \geq 1.
\]
Note that the first two inequalities above imply $y_{i,1} + y_{j,2} \geq 2 z_{i,j}^1$, which alone would also be a valid formulation for (i), 
and the same follows for conditions (ii), (iii), and (iv). 
However, having a pair of inequalities instead makes the formulation stronger since it excludes fractional solutions such as $(y_{i,1}, y_{j,2}, z_{i,j}^1) = (0.5,0.5,1)$. 
Those constraints also imply that each vertex incident to at least one edge should be assigned to a partition, 
and thus an non-embeddable problem graph may lead to an infeasible solution.

\section{Experiments}\label{sec:exps}

We implemented the ILP formulations for the template embeddings in $C_{16,16,4}$ \change{and $C_{20,20,4}$} using Gurobi \change{9}.0.0. 
We compare our results with those obtained with Fast-OCT-Reduce~(FOR) using the source code from~\citet{goodrich2018optimizing}, \change{ 
which is the state-of-the-art for embedding general graphs in QA hardware}. 
\change{
In other words, we test on the hardware graph currently commercialized by D-Wave Systems ($C_{16,16,4}$) and a larger hardware graph ($C_{20,20,4}$) to compare the scalability of both approaches. 
}
We use five random generators of graphs with a density parameter $p=.25$ for Low density, $p=.5$ for Medium density, and $p=.75$ for High density. 
Four generators are from~\citet{goodrich2018optimizing}: Barab\'asi-Albert, Erd\H{o}s-R\'{e}nyi, Regular, and Noisy Bipartite. 
We implement Percolation based on long-range percolation graphs~\citep{Percolation}.
For each vertex $v_i$, we draw a random number $\chi_i \in [0,1)$ and we include edge $\{v_i,v_j\}$ with probability $\min \left\{ 1, \frac{p}{|\chi_i - \chi_j|} \right\}$. 
The problem graphs used in the experiments and a summary of the results can be downloaded 
from \url{ftp://ftp.merl.com/pub/raghunathan/TemplateEmbedding-TestSet/}. 
The source code can be downloaded from \url{https://www.merl.com/research/license/TEAQC}.

Since TRIAD can easily generate a clique of size 64 for $C_{16,16,4}$, 
we generate 5 random graphs from each generator and with each density for number of vertices ranging from 65 to 128. 
\change{Likewise, TRIAD can easily generate a clique of size 80 for $C_{20,20,4}$, and for that case the number of vertices ranges from 81 to 160 in the experiments. 
\newchange{
The maximum sizes for the tested graphs come from assuming that each vertex would be assigned to a single partition. 
In the case of low and medium density, the expected number of edges of the random graphs is smaller than the number of edges of the template embeddings. 
In the case of high density, the expected number of edges would exceed $\binom{ML}{2}$ if the number of vertices is sufficiently large. 
Hence, we have limited the experiments with random graphs having high density to at most 105 vertices in the case of $C_{16,16,4}$ and at most 131 vertices in the case of $C_{20,20,4}$. 
In total, we tried to embed 4,225 graphs in $C_{16,16,4}$ and 5,275 graphs in $C_{20,20,4}$.
}
}
The time required to embed problems is currently the 
bottleneck in solving problems on QA hardware. 
\change{With an eye towards cases that would benefit from our classical approach as a presolve for AQC, we set a time limit of 60 seconds 
as a satisficing threshold to identify what graphs can be embedded by each approach. We also complement our analysis with plots that compare runtimes in order to identify the most effective approach.} 
All experiments were conducted on a single thread in an \change{Intel(R) Xeon(R) CPU E5-2640 v3 @ 2.60GHz and
128 GB RAM.}

\newchange{We begin by comparing the total number of graph instances that can be embedded by the different algorithms in Figure~\ref{fig:ttlInst}. For $C_{16,16,4}$, Fast-OCT-Reduce (FOR) can embed 694 instances while BTE and QTE can embed 717 and 670, respectively. Thus, BTE improves on FOR marginally while QTE embeds fewer instances than FOR.  For $C_{20,20,4}$, BTE can embed significantly more instances as compared to FOR (798 vs. 575, respectively).  QTE also does much better than FOR in this case (575 vs. 671).
} 

\newchange{Our next analysis in Figure~\ref{fig:summary} aims to identify the unique strengths of each algorithm beyond the total number of instances that are successfully embedded. We count the instances that are embedded by only one approach or template embedding, i.e. that no other approach was able to embed. From this analysis, we see that every graph that is embedded by FOR is also successfully embedded by either BTE or QTE. Among the template embeddings, BTE embeds more instances than QTE. 
}

\newchange{Tables~\ref{tab:exp_res_c16} and~\ref{tab:exp_res_c20} show the largest graph that each algorithm could respectively embed in $C_{16,16,4}$ and $C_{20,20,4}$  within the time limit for each type of random generator and density, with the largest numbers of each row in bold. This helps to identify the type of random graph generators and densities for which one approach is superior to the the others.
}

\newchange{Table~\ref{tab:each} provides a breakdown of the number of graphs embedded in $C_{16,16,4}$ and $C_{20,20,4}$ by the graph classes considered. This is a dissection of the results in Figure~\ref{fig:ttlInst}, which also complements the results reported in Tables~\ref{tab:exp_res_c16} and~\ref{tab:exp_res_c20} by quantifying the total number of graphs embedded.}

\newchange{Figures~\ref{fig:performance_c16_each} and \ref{fig:performance_c16_any} respectively show the performance profiles of each approach as well as the combination of all template embeddings in comparison with the OCT-based approach in $C_{16,16,4}$. In other words, we see the cumulative number of graphs embedded over time for each case.  Figures~\ref{fig:performance_c20_each} and \ref{fig:performance_c20_any} show the same type of performance profiles with respect to embedding into $C_{20,20,4}$. The performance profiles evidence that the use of template embeddings results in runtimes that are orders of magnitude faster than FOR.}

In some of these figures and tables, we respectively denote $C_{16,16,4}$ and $C_{20,20,4}$ as C16 and C20 for brevity. 

\newchange{One noticeable difference between the template embeddings is whether we can determine if the embedding problem is feasible or not by the time limit. In the case of BTE, we were able to find a feasible embedding or determine that none exists by the time limit. In the case of QTE, it was not possible to determine if an embedding was infeasible by the time limit. That speaks to the importance of having models that are as simple as possible, and also highlights the fact that an ILP formulation can reliably determine if a graph can be embedded in a complete bipartite graph for the sizes that we tested.}

\begin{figure}[h]
\centering
\includegraphics[width=1\columnwidth]{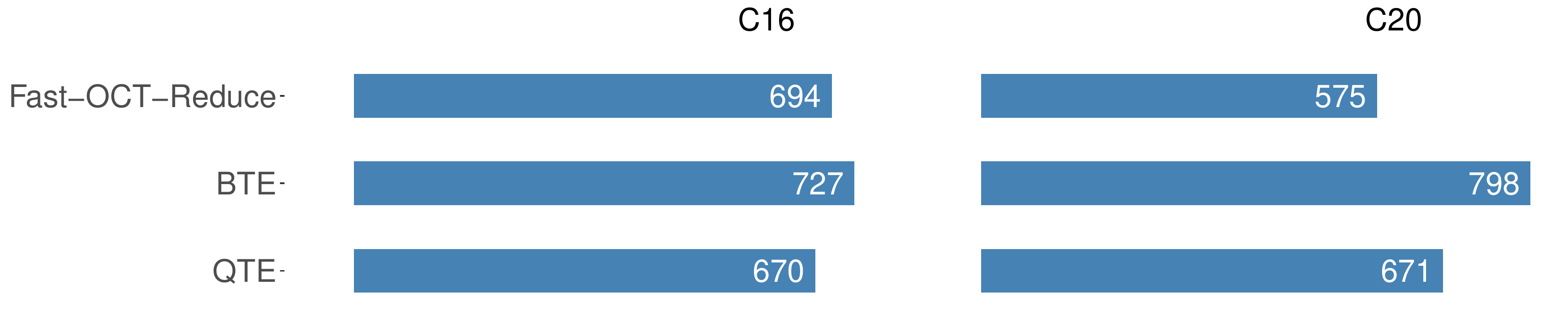}  
\caption{Number of graphs embedded by Fast-OCT-Reduce and each template embedding in C16 and C20. 
}
\label{fig:ttlInst}
\end{figure}

\begin{figure}
\centering
\includegraphics[width=0.49\columnwidth]{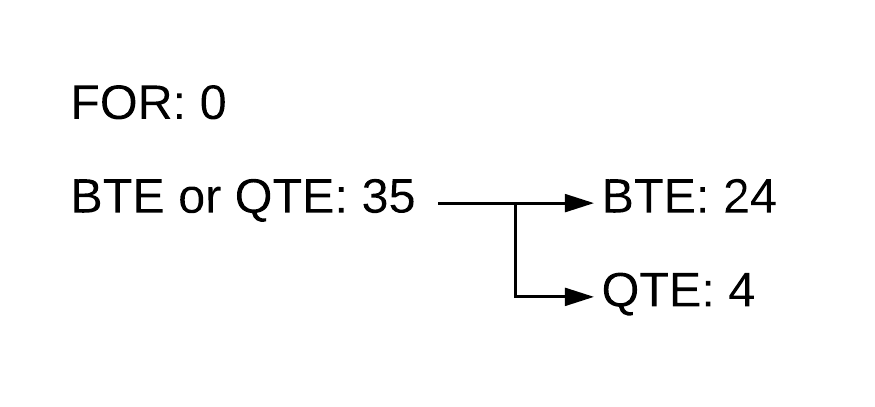}
\includegraphics[width=0.49\columnwidth]{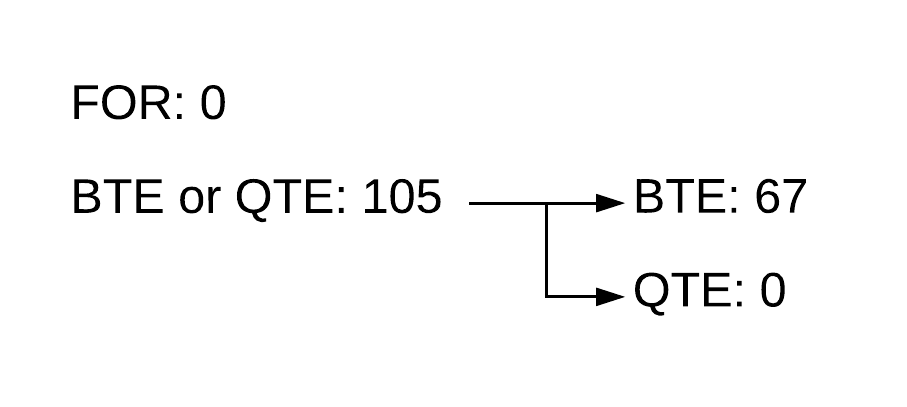}
\caption{Graphs uniquely embedded by each approach or a specific template embedding in C16 (left) and C20 (right). 
}
\label{fig:summary}
\end{figure}

\begin{table}
\centering
\caption{The largest graph embedded by each algorithm per generator and density in C16. 
}\smallskip
\begin{tabular}{@{\extracolsep{0.5pt}}cccccccccccc}
Random Graph & Density & FOR & BTE & QTE \\\hline
\noalign{\vskip2.5pt}
\multirow{3}{*}{Percolation}            
& low     & 67  & \textbf{68}  & \textbf{68}  \\
& medium  & \textbf{66}  & \textbf{66}  & \textbf{66}  \\
& high    & \textbf{66}  & \textbf{66}  & \textbf{66}  \\\hline
\multirow{3}{*}{Barab\'asi-Albert}        
& low     & 77  & \textbf{78}  & 76  \\
& medium  & \textbf{72}  & \textbf{72}  & 71  \\
& high    & \textbf{69}  & \textbf{69}  & \textbf{69} \\\hline
\multirow{3}{*}{Erd\H{o}s-R\'{e}nyi}     
& low     & 79  & \textbf{80}  & 77  \\
& medium  & 71  & \textbf{72}  & \textbf{72}  \\
& high    & \textbf{69}  & \textbf{69}  & 68  \\\hline
\multirow{3}{*}{Regular}                
& low     & \textbf{78}  & \textbf{78}  & 77  \\
& medium  & \textbf{72}  & \textbf{72}  & \textbf{72}  \\
& high    & \textbf{69}  & \textbf{69}  & \textbf{69}  \\\hline
\multirow{3}{*}{Noisy Bipartite} 
& low     & \textbf{92}  & \textbf{92}  & \textbf{92}  \\
& medium  & \textbf{85}  & \textbf{85}  & 82  \\
& high    & \textbf{81}  & \textbf{81}  & 80  \\\hline
\end{tabular}

\label{tab:exp_res_c16}
\end{table}

\begin{table}
\centering
\caption{The largest graph embedded by each algorithm per generator and density in C20. 
}\smallskip
\begin{tabular}{@{\extracolsep{0.5pt}}cccccccccccc}
Random Graph & Density & FOR & BTE & QTE \\\hline
\noalign{\vskip2.5pt}
\multirow{3}{*}{Percolation}            
& low     & 81  & \textbf{84}  & 83  \\
& medium  & \textbf{82}  & \textbf{82}  & \textbf{82}  \\
& high    & \textbf{82}  & \textbf{82}  & \textbf{82}  \\\hline
\multirow{3}{*}{Barab\'asi-Albert}        
& low     & 85  & \textbf{95}  & 92  \\
& medium  & 86  & \textbf{89}  & 87  \\
& high    & \textbf{85}  & \textbf{85}  & 84  \\\hline
\multirow{3}{*}{Erd\H{o}s-R\'{e}nyi}     
& low     & \textbf{97}  & \textbf{97}  & 94  \\
& medium  & \textbf{89}  & \textbf{89}  & 88  \\
& high    & 85  & \textbf{86}  & 85  \\\hline
\multirow{3}{*}{Regular}                
& low     & \textbf{95}  & \textbf{95}  & 94  \\
& medium  & \textbf{88}  & \textbf{88}  & 87  \\
& high    & \textbf{85}  & \textbf{85}  & 84  \\\hline
\multirow{3}{*}{Noisy Bipartite} 
& low     & 110 & \textbf{111} & \textbf{111} \\
& medium  & 100 & \textbf{102} & 99  \\
& high    & 99  & \textbf{100} & 96  \\\hline
\end{tabular}

\label{tab:exp_res_c20}
\end{table}

\begin{table}[h]
\centering
\caption{Number of graphs embedded by Fast-OCT-Reduce~(FOR) and Template Embeddings~(TE) in C16 and C20. 
}
\begin{tabular}{@{\extracolsep{4pt}}lcccc}
& \multicolumn{2}{c}{C16} & \multicolumn{2}{c}{C20}\\
\cline{2-3}
\cline{4-5}
Random Graph    & FOR & TE & FOR & TE \\
\cline{1-1}
\cline{2-2}
\cline{3-3}
\cline{4-4}
\cline{5-5}
\noalign{\vskip2.5pt}
Percolation                    & 35 & 37  & 24 & 38 \\
Barab\'asi-Albert                & 114 & 126 & 67 & 138\\
Erd\H{o}s-R\'{e}nyi           & 128  & 139 & 90 & 150\\
Regular                        & 127 & 129 & 112 & 140\\
 Noisy Bipartite         & 290 & 301 & 282 & 332\\
\cline{1-1}
\cline{2-2}
\cline{3-3}
\cline{4-4}
\cline{5-5}
\noalign{\vskip2.5pt}
Total & 694 & 732 & 575 & 798 \\
\end{tabular}
\label{tab:each}
\end{table}

\begin{figure}[h!]
\centering
\includegraphics[width=0.9\columnwidth]{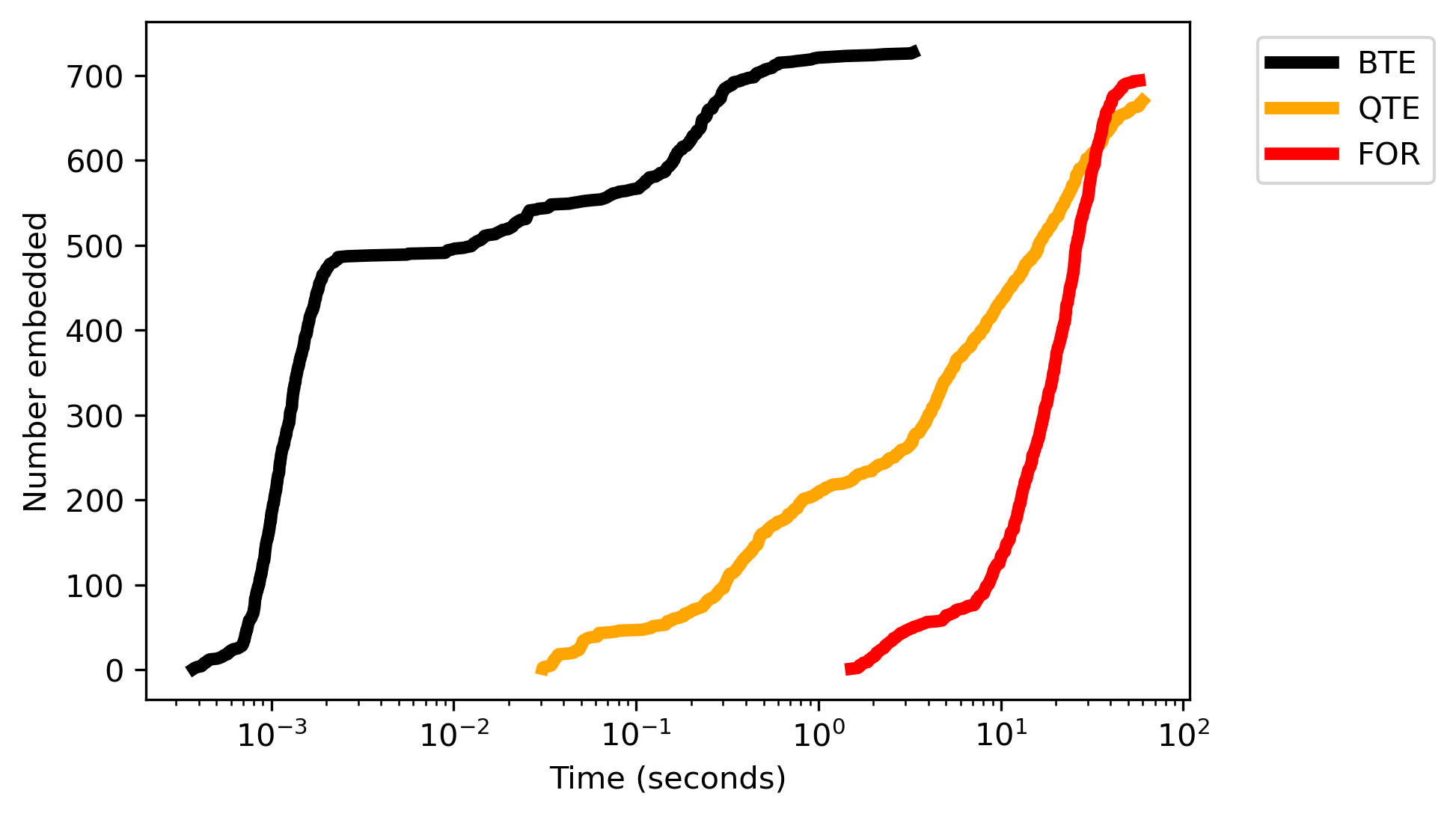} 
\caption{\change{Performance profile of graphs embedded by Fast-OCT-Reduce and each template embedding in C16.} 
}
\label{fig:performance_c16_each}
\end{figure}

\begin{figure}[h!]
\centering
\includegraphics[width=0.9\columnwidth]{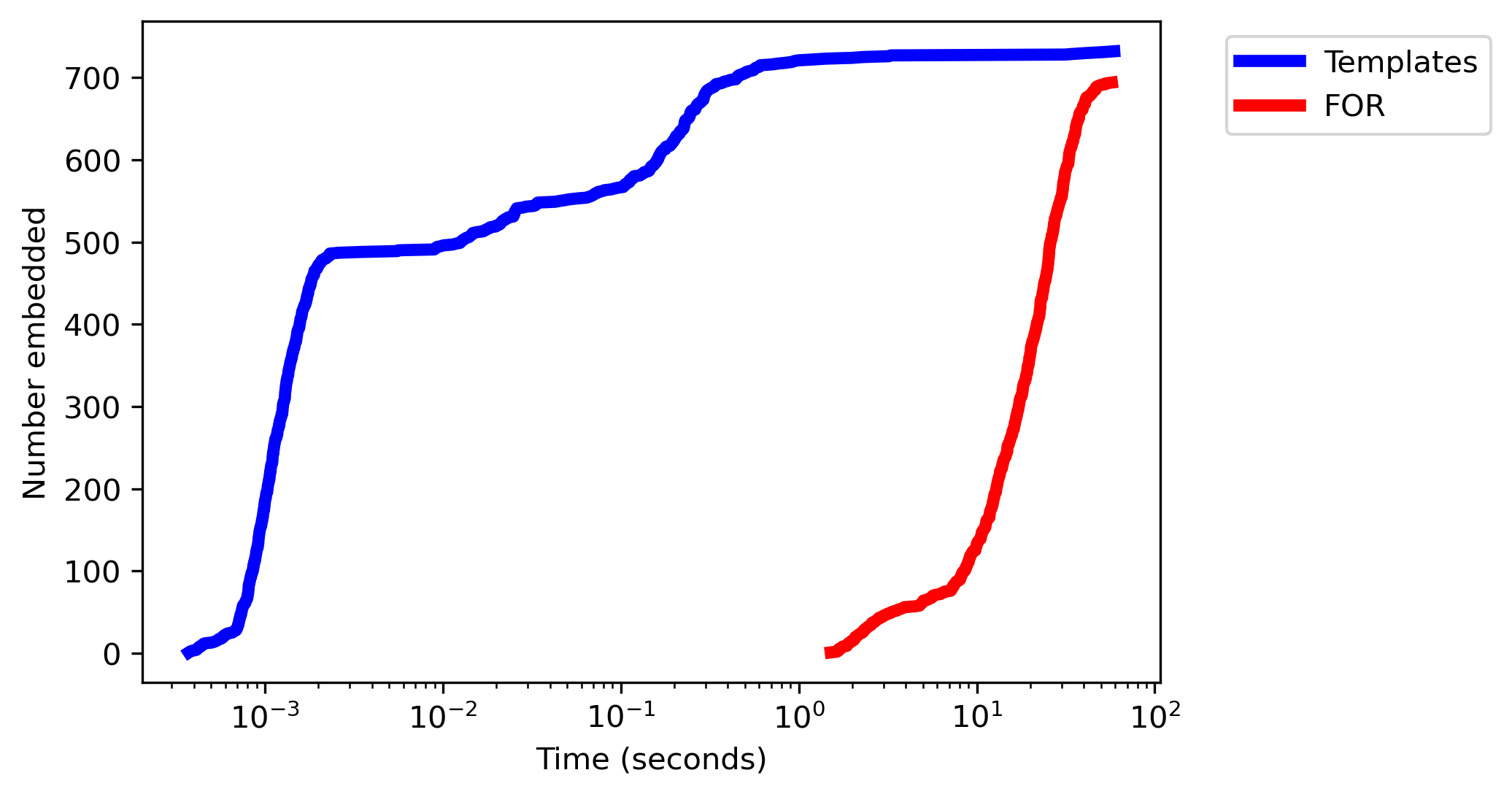} 
\caption{\change{Performance profile of graphs embedded by Fast-OCT-Reduce and any template embedding in C16.} 
}
\label{fig:performance_c16_any}
\end{figure}

\begin{figure}[h!]
\centering
\includegraphics[width=0.9\columnwidth]{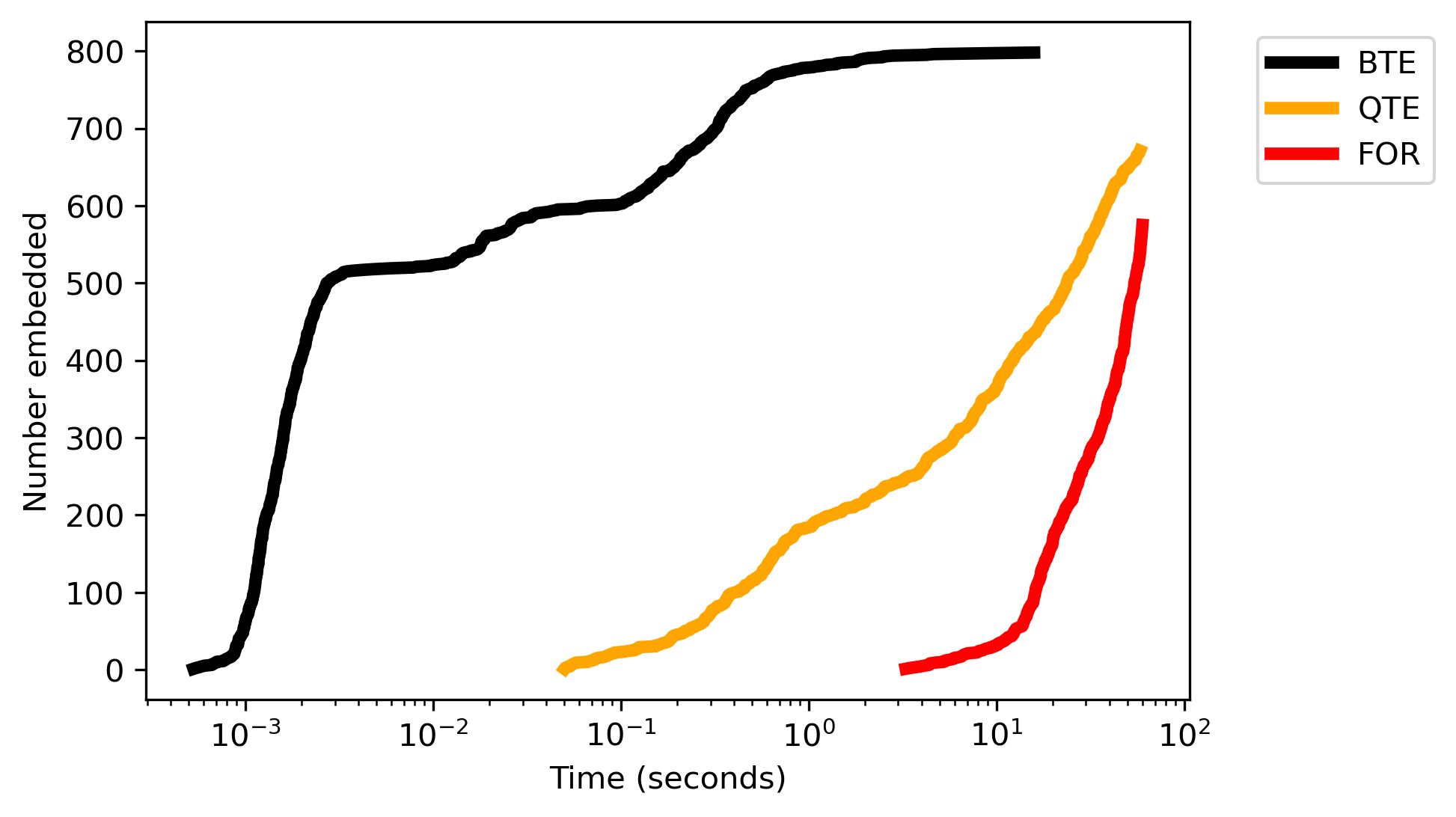} 
\caption{\change{Performance profile of graphs embedded by Fast-OCT-Reduce and each template embedding in C20.} 
}
\label{fig:performance_c20_each}
\end{figure}

\begin{figure}[h!]
\centering
\includegraphics[width=0.9\columnwidth]{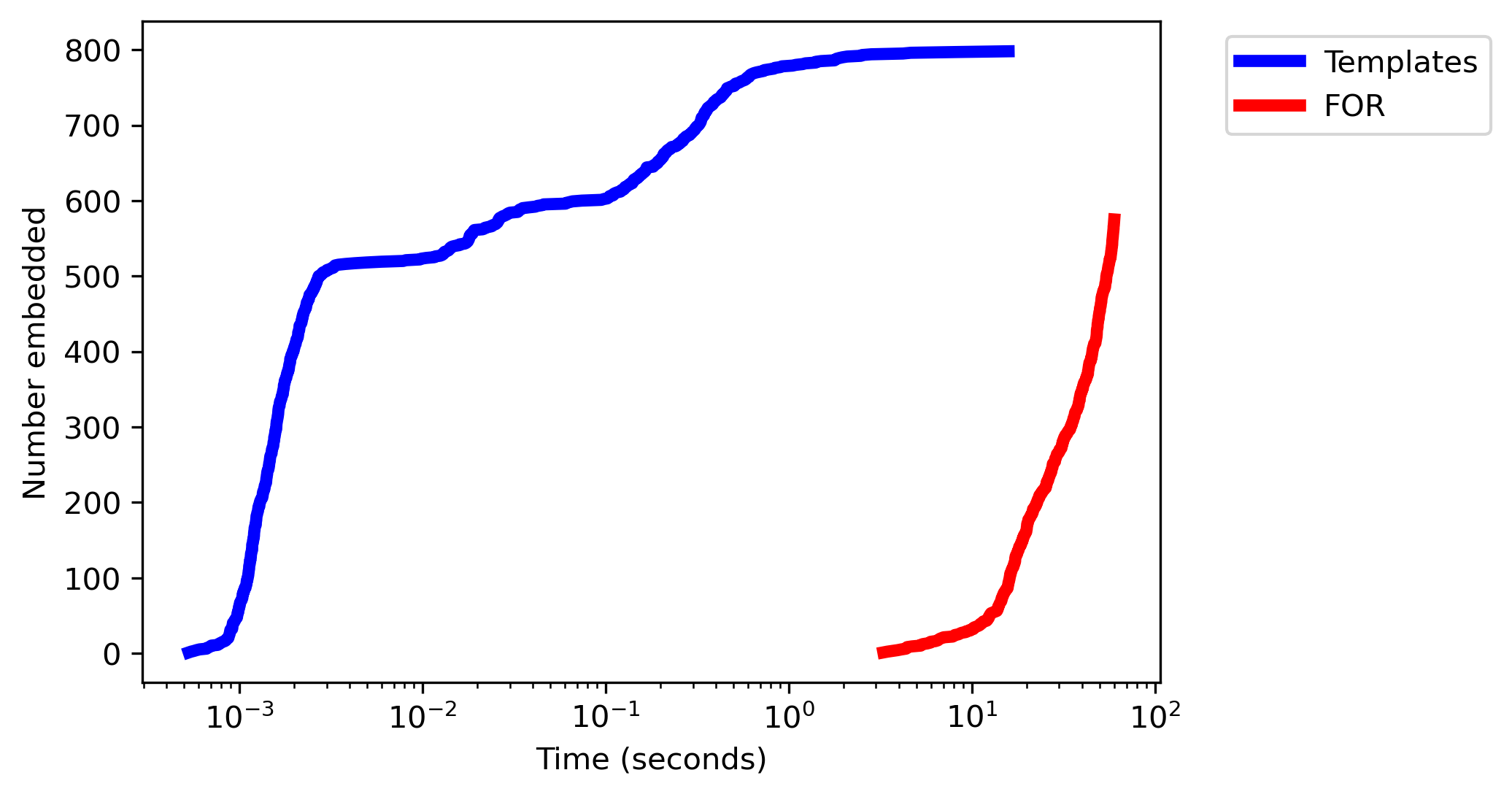} 
\caption{\change{Performance profile of graphs embedded by Fast-OCT-Reduce and any template embedding in C20.} 
}
\label{fig:performance_c20_any}
\end{figure}

\subsection{Analysis} 
The ILP formulations based on template embeddings can  
embed \change{5\% and 39\% more graphs than the OCT-based approach in $C_{16,16,4}$ and $C_{20,20,4}$, respectively. 
More importantly, these graphs can be embedded substantially faster. }  Furthermore, every 
graph embedded by the OCT-based approach is also embedded by one of the 
template embeddings.  
\change{
Out of the 30 different combinations of random graph, density, and hardware graph, BTE embeds the largest graph in all 30 of them. Meanwhile, FOR embeds the largest graph in 19, and QTE in 11. The results are particularly favorable with the larger hardware graph, in which we can often embed more graphs with template embeddings and fewer with the OCT-based approach. That evidences the scalability of the proposed approach. While BTE is a special case of QTE, 
the simpler formulation allowed BTE to embed more Barab\'asi-Albert, Erd\H{o}s-R\'{e}nyi, regular, and noisy bipartite graphs within the time limit in both hardware graphs.  
QTE performed comparatively better with Percolation graphs of lower density. While BTE dominates the performance profile curves, we note that QTE helped embedding extra graphs as the runtime increased. }

\change{Our main takeaway is that solving a simpler embedding problem yields better results. Because of its relatively small formulation, BTE led to very fast and scalable results. 
In addition, as conjectured, QTE showed favorable results with lower density graphs. 
}

\section{Conclusion and Future Work}\label{sec:concl}

We proposed the concept of template embeddings to map quadratic unconstrained binary optimization problems into quantum annealers. 
Each template embedding corresponds to a minor of the Chimera graph that can embed a variety of large and dense graphs. 
We also introduced integer linear programming formulations to find such mappings and showed that one of these formulations is exact, 
and thus certify if a given graph can be embedded in the corresponding \change{minor}. 
Experimental results clearly demonstrate the potential of the  
proposed approach, especially to embed problems having more variables than the maximum embeddable clique in the Chimera graph corresponding to the QA hardware.

Interestingly, our approach makes a better use of quantum annealers  
by leveraging classical optimization algorithms as a preprocessing step. 
The performance of solvers for mixed-integer linear programming has improved by orders of magnitude in the last decades~\citep{Bixby}. 
We believe that there is potential for further coordination between classical and quantum algorithms for discrete optimization.

In future work, we intend to investigate template embeddings 
that are adaptive to problem sparsity and  
incorporate knowledge of faulty qubits in the formulations. 
\change{We should also consider improvements to the formulation of template embeddings and solving strategies to scale up this approach as the size of the hardware graph increases in future releases.}

\change{Another line of work consists of extending  
the template-based approach to minors of Pegasus, the proposed hardware graph for future QA hardware by 
D-Wave Systems~\citep{Pegasus}. However, we note that the contributions of this work are also valid for Pegasus, as observed in a technical report by D-Wave~\citep{DWaveTR}:}

\begin{quote}
\change{
\emph{Much of the embedding support that exists for the Chimera topology may be extended to the Pegasus family of topologies with relative ease. Known constructions for Chimera embeddings  of  structured  problems  translate  to  Pegasus  without  modification,  because Chimera occurs as a subgraph of Pegasus.}}
\end{quote}

\paragraph{Acknowledgements} The first two authors of the paper, Thiago Serra and Teng Huang, were employed by Mitsubishi Electric Research Laboratories during the initial development of this project. They are the corresponding authors. In addition, the authors would like to thank Claudio Leonardo Lucchesi and Yoshiharu Kohayakawa for their input on random graphs.

\bibliographystyle{plainnat}
\bibliography{embedding}

\end{document}